\documentclass[10pt,journal,comsoc]{IEEEtran}
\usepackage{graphicx} 
\usepackage{graphics}
\usepackage{subfigure}
\usepackage{multirow}
\usepackage{textcomp}
\usepackage{array}
\usepackage{xcolor}

\usepackage{tabularx}
\newcolumntype{Y}{>{\centering\arraybackslash}X} 

\usepackage{tikz}
\usetikzlibrary{shapes, arrows.meta, positioning, fit, backgrounds, calc, shadows}

\usepackage{amsmath,amssymb,amsfonts}
\usepackage{bm}

\usepackage{algorithm}
\usepackage{algorithmic}
\usepackage[algo2e,linesnumbered,ruled]{algorithm2e}

\usepackage{amsthm}
\theoremstyle{definition}

\newtheorem{definition}{Definition}

\newtheorem{proposition}{Proposition}
\newtheorem{thm}{Theorem}

\newtheoremstyle{remarkstyle}
  {3pt}                
  {3pt}                
  {}                   
  {}                   
  {\bfseries}          
  {.}                  
  {.5em}               
  {}                   

\theoremstyle{remarkstyle}
\newtheorem{remark}{Remark} 

\SetKwProg{Fn}{Function}{}{end}

\SetAlgoSkip{}

\newcolumntype{M}[1]{>{\centering\arraybackslash}m{#1}}

\DeclareSymbolFont{symbolsC}{U}{txsyc}{m}{n}
\DeclareMathSymbol{\notniFromTxfonts}{\mathrel}{symbolsC}{61}

\title{ORCHID: Fairness-Aware Orchestration in Mission-Critical Air-Ground Integrated Networks}

\author{Chuan-Chi~Lai,~\IEEEmembership{Member,~IEEE}, and 
    Chi Jai Choy
    \IEEEcompsocitemizethanks{
        \IEEEcompsocthanksitem{This research was supported by the National Science and Technology Council, Taiwan, under Grant Nos. NSTC 114-2221-E-194-062- and NSTC 115-2221-E-194-042-MY2.
		This work was also partially supported by the Advanced Institute of Manufacturing with High-tech Innovations (AIM-HI) from the Featured Areas Research Center Program within the framework of the Higher Education Sprout Project by the Ministry of Education (MOE) in Taiwan. \emph{(Corresponding author: Chuan-Chi~Lai.)}}
        \IEEEcompsocthanksitem{C.-C. Lai is with the Department of Communications Engineering, National Chung Cheng University, Minxiong Township, Chiayi County 621301, Taiwan, and also with the Advanced Institute of Manufacturing with High-tech Innovations (AIM-HI), National Chung Cheng University, Minxiong Township, Chiayi County 621301, Taiwan (e-mail: chuanclai@ccu.edu.tw).}
        \IEEEcompsocthanksitem{Chi Jai Choy is with the Department of Information Engineering and Computer Science, Feng Chia University, Taichung 407102, Taiwan.}
    }
}

\IEEEtitleabstractindextext{%
\begin{abstract}
Unmanned Aerial Vehicles (UAVs) provide pivotal on-demand wireless coverage for mission-critical 6G Air-Ground Integrated Networks (AGINs). However, traditional Deep Reinforcement Learning (DRL) orchestration struggles with multi-agent non-stationarity and balancing Energy Efficiency (EE) with service equity. To address these challenges, we propose ORCHID (Orchestration of Resilient Coverage via Hybrid Intelligent Deployment), a stability-enhanced two-stage learning framework. 
First, ORCHID utilizes Ground Base Station (GBS)-aware topology partitioning to mitigate the exploration cold-start problem. Second, a Reset-and-Finetune (R\&F) mechanism within the Multi-Agent Proximal Policy Optimization (MAPPO) architecture enhances learning stability by synchronizing learning-rate decay with optimizer resetting, thereby reducing gradient variance and mitigating policy degradation. Furthermore, by formulating the resource allocation problem as an Egalitarian Bargaining Game (EBG), our theoretical analysis provides new insights into the relationship between fairness and energy efficiency. Specifically, the proposed Max-Min Fairness (MMF) design provides a theoretical explanation for the emergence of a more dispersed and load-balanced UAV topology, while experimental results further demonstrate that this spatial organization improves system energy efficiency compared with conventional Proportional Fairness (PF) schemes. Moreover, ORCHID deliberately sacrifices opportunistic throughput peaks in favor of more stable long-term service performance, resulting in consistently lower performance variance while maintaining a higher minimum service level and substantially improving service fairness. Extensive experimental results demonstrate robust topology adaptation, stable policy convergence, and consistent performance gains over representative state-of-the-art baselines.
\end{abstract}

\begin{IEEEkeywords}
Air-Ground Integrated Networks, Multi-Agent Reinforcement Learning, UAV Orchestration, Max-Min Fairness, Stability Analysis.
\end{IEEEkeywords}}
\begin{document}

\maketitle
\IEEEdisplaynontitleabstractindextext

%
\IEEEpeerreviewmaketitle

\section{Introduction}
\label{sec:intro}

\IEEEPARstart{A}{s} telecommunications evolve toward the \textit{Sixth-Generation} (6G) era, \textit{Air-Ground Integrated Networks} (AGINs) have emerged to seamlessly converge terrestrial infrastructure with \textit{Non-Terrestrial Networks} (NTN), spanning satellites to \textit{High Altitude Platforms} (HAPs) \cite{jia2025distributionally}, for ubiquitous global connectivity \cite{11047530,10579820,9275613}. Within this heterogeneous architecture, \textit{Unmanned Aerial Vehicles} (UAVs) play a pivotal role as the flexible aerial layer. Serving as agile base stations, UAVs offer rapid wireless coverage extension and mobile edge computing capabilities \cite{zhan2025joint} to complement terrestrial networks \cite{7470933}. Unlike rigid ground infrastructure, UAVs benefit from high mobility and superior \textit{Line-of-Sight} (LoS) probabilities. These characteristics enable on-demand coverage for mission-critical scenarios, such as disaster relief operations where ground segments are compromised \cite{9177297,8642333}. In such contexts, establishing a resilient \textit{Aerial Access Network} (AAN) via collaborative orchestration is not merely a technical enhancement but a critical requirement to maintain a reliable digital lifeline for trapped or isolated users.

However, realizing collaborative AGINs in public safety scenarios faces two fundamental challenges. The first involves the perceived trade-off between system efficiency and service fairness. Prevalent schemes prioritize maximizing global metrics, specifically the aggregate system sum-rate~\cite{10122731}. While efficient for commercial networks, such strategies are often detrimental in emergency contexts. Maximizing sum-rate biases resources toward users with strong channels, inevitably leading to coverage starvation for tail users at the cell edge or obstructed by debris~\cite{9946428}. In mission-critical networks, this creates unacceptable blind spots, failing to provide minimum \textit{Quality of Service} (QoS) for vulnerable victims. Therefore, fairness in this context is not just a metric of equity but a matter of safety. A ``no-user-left-behind'' policy is essential to mitigate coverage holes and enhance ubiquitous connectivity~\cite{jain1984aquantitative}.

The second challenge stems from the inherent instability of \textit{Multi-Agent Reinforcement Learning} (MARL). 
While \textit{Deep Reinforcement Learning} (DRL) is widely adopted for UAV trajectory design \cite{drones8060214,lowe2017multi}, decentralized coordination remains a critical challenge. Recent high-quality works, such as the MUL-VR framework \cite{tang2025mulvr}, have pioneered distributed coordination for multi-UAV collaborative visual perception, demonstrating the profound advantages of decentralized execution architectures. Inspired by this elegant distributed design philosophy, standard \textit{Centralized Training with Decentralized Execution} (CTDE) algorithms, like \textit{Multi-Agent Deep Deterministic Policy Gradient} (MADDPG) or \textit{Multi-Agent Proximal Policy Optimization} (MAPPO), have been heavily utilized for spatial orchestration. However, simultaneous multi-agent learning often creates a non-stationary environment where optimal policies fluctuate unpredictably. These standard algorithms frequently suffer from primacy bias and policy degradation in late-stage training \cite{NIPS2022_1787_Yu}. 
Specifically, the accumulation of historical momentum often drives agents into suboptimal local equilibria, making it difficult to adapt to fine-grained coordination requirements. In mission-critical operations, such instability is intolerable; divergence from near-optimal solutions due to excessive variance could compromise infrastructure integrity at a critical moment.

To bridge these gaps, we propose ORCHID (Orchestration of Resilient Coverage via Hybrid Intelligent Deployment), a novel stability-enhanced framework for mission-critical AGINs. First, ORCHID integrates \textit{Ground Base Station} (GBS)-aware topology partitioning (via modified K-Means++) to mitigate the cold-start problem, rapidly guiding UAVs toward coarse global optima based on user density~\cite{SODA07_Arthur}. Second, to combat policy degradation, we develop a \textit{Reset-and-Finetune} (R\&F) mechanism. By periodically resetting optimizer states and applying decayed learning rates, R\&F enables stable micro-adjustments for precise coverage, effectively alleviating the stability-plasticity dilemma. Moreover, distinct from efficiency-first approaches, we formulate a fairness-aware objective incorporating \textit{Jain's Fairness Index} (JFI)~\cite{jain1984aquantitative} to explicitly penalize coverage holes.

The main contributions are summarized as follows:
\begin{itemize}
    \item \textbf{Stability-Enhanced Learning Framework:} We propose the ORCHID framework, a coarse-to-fine orchestration strategy that synergizes GBS-aware initialization with the MAPPO-based R\&F mechanism. This two-stage approach effectively suppresses gradient noise and primacy bias, resolving the non-stationarity issue inherent in multi-agent environments.
    \item \textbf{Theoretical Insights into the Efficiency-Fairness Relationship:} By formulating the resource allocation problem as an \textit{Egalitarian Bargaining Game} (EBG) \cite{nash1950bargaining, kalai1977proportional}, we provide a rigorous theoretical analysis of the relationship between \textit{Max-Min Fairness} (MMF) and \textit{Proportional Fairness} (PF). Our analysis explains why the proposed MMF formulation encourages a more dispersed and load-balanced UAV topology, while experimental results show that such spatial organization alleviates backhaul congestion, improves service fairness, and enhances overall system energy efficiency (EE) compared with PF.
    \item \textbf{Robust Adaptation and Policy Stability:} Extensive simulations demonstrate that ORCHID achieves robust topology adaptation and stable policy convergence compared with representative state-of-the-art MARL baselines (e.g., MADDPG) and conventional heuristic methods. By mitigating coverage holes and improving worst-case user rates, ORCHID better supports resilient communication services in highly dynamic mission-critical scenarios.
\end{itemize}

The remainder of this paper is organized as follows. Section~\ref{sec:related} reviews related work. Sections~\ref{sec:system_model} and~\ref{sec:orchid} present the system model and the proposed ORCHID framework. Theoretical analysis and performance evaluations are detailed in Sections~\ref{sec:theoretical_analysis} and~\ref{sec:evaluation}, respectively, followed by the conclusion in Section~\ref{sec:conclusion}.

\section{Related Work}
\label{sec:related}

This section reviews the state-of-the-art literature on UAV orchestration, categorized into two dominant paradigms: optimization-based approaches and learning-based frameworks. Additionally, fairness-aware resource allocation is critically discussed to motivate the specific design objectives of the proposed ORCHID framework.

\subsection{Optimization-based Approaches}

The joint optimization of UAV placement and trajectory design has been extensively investigated to maximize wireless coverage and network performance. Early approaches were predominantly predicated on geometric models and convex optimization techniques. For instance, circle packing theory was utilized to determine optimal 3D coordinates for UAVs, with the objective of maximizing total coverage area while minimizing transmit power~\cite{7486987,R2025102621,11153951}. Similarly, heuristic strategies, such as edge-prior placement algorithms, were proposed to explicitly enhance connectivity for cell-edge users~\cite{8885992}. Nevertheless, these static deployment strategies often assume simplified channel models and uniform user distributions, which renders them ill-suited for complex urban environments or dynamic disaster scenarios.

To accommodate dynamic constraints, the research paradigm has shifted toward sequential decision-making using iterative algorithms. Methods such as \textit{Successive Convex Approximation} (SCA) and \textit{Block Coordinate Descent} (BCD) have been widely employed to tackle non-convex trajectory optimization problems~\cite{8642333, 8618602}. Recent studies have further integrated emerging technologies, including \textit{Non-Orthogonal Multiple Access} (NOMA) and \textit{Reconfigurable Intelligent Surfaces} (RIS), into UAV networks. For example, joint optimization frameworks have been formulated to enhance spectral efficiency in NOMA-assisted and STAR-RIS-assisted scenarios~\cite{10320337, li2020uav}. However, these optimization-based approaches typically rely on \textit{Alternating Optimization} (AO) techniques, which suffer from prohibitive computational complexity and necessitate perfect \textit{Channel State Information} (CSI). These limitations significantly hinder their applicability in delay-sensitive, mission-critical missions where real-time adaptability is paramount.

\subsection{Learning-based Approaches}

To circumvent the computational latency and CSI dependencies of traditional optimization, DRL has gained traction as a robust alternative. Pioneering works have successfully applied DRL to UAV navigation and energy efficiency optimization~\cite{liu2020energy, hu2020reinforcement}. However, managing the dimensionality curse in complex multi-UAV scenarios remains a formidable challenge due to the exponential growth of the joint state-action space. Consequently, MARL has become the preferred methodology.

To facilitate cooperative behaviors, the CTDE framework has been established as the standard architecture. Algorithms such as MADDPG and MAPPO enable agents to share global information during training while executing independent decisions during inference~\cite{lowe2017multi, NIPS2022_1787_Yu}. Recent top-tier studies have successfully applied these frameworks to complex UAV orchestration tasks. For instance, multi-agent DRL has been utilized for distributed trajectory optimization in differentiated service scenarios~\cite{ning2024multi}, and MAPPO-based approaches have demonstrated effectiveness for hierarchical resource allocation in aerial computing systems~\cite{kang2023mappo}. Furthermore, game-theoretic DRL mechanisms have been developed to jointly optimize power allocation and 3D deployment for throughput maximization~\cite{10146333}. 

Despite these advancements, a pervasive challenge in MARL-based orchestration is non-stationarity. The simultaneous policy updates of multiple agents often induce environmental instability. A critical oversight in most prior works is the assumption of ideal convergence; they often present only best-performing curves while neglecting the issue of policy degradation or catastrophic forgetting that frequently occurs in later training stages. In contrast, the framework proposed in this study explicitly addresses this stability gap by introducing a resilient two-stage learning mechanism.

\subsection{Fairness-Aware Resource Allocation}

Navigating the trade-off between system efficiency and user fairness presents a fundamental dilemma in wireless networks. Prevalent resource allocation schemes largely prioritize the maximization of aggregate system metrics, specifically the system sum-rate~\cite{10122731}. While this objective maximizes total throughput, it inherently biases resource allocation toward users with favorable channel conditions. In public safety or emergency contexts, this PF strategy can lead to severe coverage holes, failing to guarantee a minimum QoS for vulnerable users. 

To rectify this disparity, MMF has been proposed as a superior objective. Recent literature has explored adaptive deployment approaches that incorporate fairness metrics, such as JFI~\cite{jain1984aquantitative}, to balance traffic offloading~\cite{9475471}. While traffic balancing strategies have shown promise~\cite{9946428}, directly optimizing for MMF in non-convex UAV environments remains computationally intractable. Existing literature lacks a comprehensive framework that can simultaneously guarantee high user fairness, maintain acceptable system throughput, and ensure the convergence stability of learning agents. The methodology proposed in this paper explicitly addresses these limitations by integrating a MMF-driven objective within a resilient, stability-enhanced MARL architecture.

\section{System Model and Problem Formulation}
\label{sec:system_model}

This section details the system architecture of the proposed mission-critical AGIN. As illustrated in Fig.~\ref{fig:system_model}, the network orchestrates a dynamic swarm of UAVs to collaborate with surviving terrestrial infrastructure (i.e., the macro GBS), establishing a resilient digital lifeline across disaster ruins and medical relief camps. In the following subsections, we mathematically characterize the spatial user distribution, UAV mobility kinematics, air-to-ground channel propagation, and the underlying communication protocols. Building upon these models, we finally formulate the joint trajectory design and power control problem, explicitly emphasizing a fairness-aware optimization objective.

\color{black}
Consider a downlink wireless network deployed within a target geographical area of dimensions $D \times D$. The network infrastructure is heterogeneous, comprising two distinct tiers:
\begin{itemize}
    \item \textbf{Terrestrial Tier:} A fixed macro GBS is located at the center of the service area to provide seamless basic coverage. The position of the GBS is fixed and denoted as $\mathbf{q}_{0} = [D/2, D/2, H_{\rm GBS}]^T$. The GBS operates with a constant transmit power $P_{\rm GBS}$ to maintain a stable service baseline. Unlike the energy-constrained UAVs, the GBS is connected to the power grid; thus, its energy efficiency is not the primary optimization objective in this study.
    \item \textbf{Aerial Tier:} A fleet of $N$ rotary-wing UAVs, denoted by the set $\mathcal{U} = \{u_1, u_2, \dots, u_N\}$, is deployed to assist the GBS. These UAVs act as mobile aerial base stations to offload traffic and enhance coverage for edge users or isolated clusters.
\end{itemize}

The system serves a set of $M$ \textit{Ground Users} (GUs), denoted as $\mathcal{G} = \{e_1, e_2, \dots, e_M\}$. The horizontal position of the $m$-th GU (user $e_m$) is fixed and known, represented by $\mathbf{x}_m = [x_m, y_m]^T \in \mathbb{R}^2$. The complete 3D coordinate of user $e_m$ is given by $\mathbf{l}_m = [x_m, y_m, 0]^T$, assuming all users are located at ground level ($z=0$). 

\begin{figure}[!t]
    \centering
    \includegraphics[width=\linewidth]{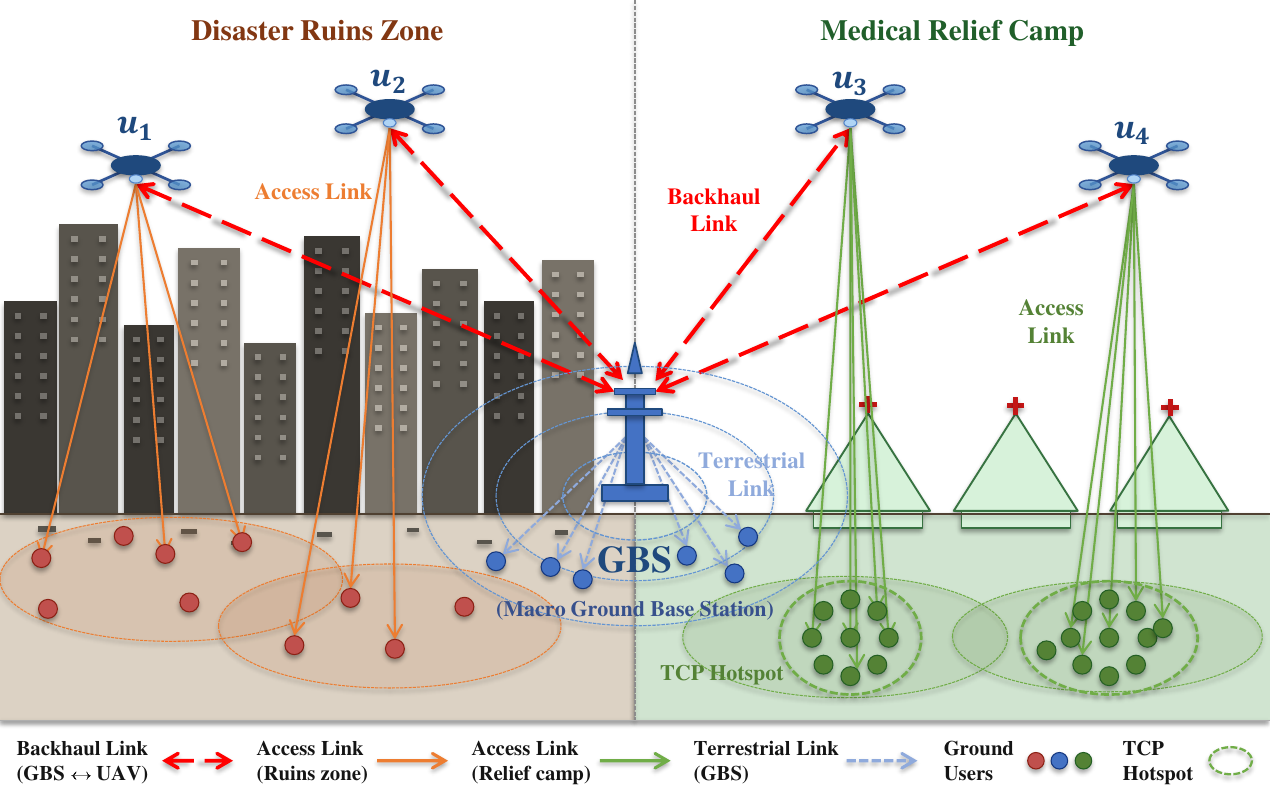}
    \caption{System model of the mission-critical Air-Ground Integrated Network (AGIN), where a swarm of UAVs assists a macro GBS to provide resilient on-demand coverage for victims across disaster ruins and medical relief camps.}
    \label{fig:system_model}
\end{figure}

\subsection{User Spatial Configuration}
\label{subsec:user_dist}
To accurately reflect the spatial heterogeneity inherent in mission-critical environments (e.g., victims gathering at evacuation shelters or trapped in isolated disaster ruins), we move beyond the simplistic uniform distribution assumption. Instead, the locations of GUs are modeled using a \textit{Thomas Cluster Process} (TCP) \cite{chiu2013stochastic}, which is a specialized class of \textit{Poisson Cluster Processes} (PCP).

The generation of user coordinates in TCP involves a two-stage stochastic process:
\begin{enumerate}
    \item \textbf{Parent Process (Hotspot Centers):} First, a set of parent points, representing the centers of evacuation hotspots, is generated according to a \textit{homogeneous Poisson Point Process} (HPPP) with intensity $\lambda_p$. Let $\Phi_p = \{\mathbf{c}_1, \mathbf{c}_2, \dots, \mathbf{c}_{K_p}\}$ denote the locations of these $K_p$ cluster centers.
    \item \textbf{Daughter Process (User Scattering):} For each parent point $\mathbf{c}_k \in \Phi_p$, a cluster of users (daughter points) is generated. The number of users in each cluster follows a Poisson distribution with mean $\bar{M}$. The positions of these users are independently and identically distributed (i.i.d.) around the cluster center $\mathbf{c}_k$ following a symmetric bivariate normal distribution with variance $\sigma^2_{\rm scatter}$.
\end{enumerate}

Mathematically, the \textit{probability density function} (PDF) of a user's location $\mathbf{x} \in \mathbb{R}^2$ relative to its cluster center $\mathbf{c}_k$ is given by:
\begin{equation}
    f(\mathbf{x} | \mathbf{c}_k) = \frac{1}{2\pi\sigma^2_{\rm scatter}} \exp\left( - \frac{\|\mathbf{x} - \mathbf{c}_k\|^2}{2\sigma^2_{\rm scatter}} \right),
\end{equation}
where $\sigma_{\rm scatter}$ controls the spatial spread of the hotspot. A smaller $\sigma_{\rm scatter}$ implies highly concentrated victim clusters, while a larger value indicates a more diffuse distribution of affected users.

Based on this distribution, we adopt a hierarchical service strategy. Users located in the cluster nearest to the map center (i.e., the GBS location $\mathbf{q}_0$) are associated with the terrestrial tier, forming the set $\mathcal{G}_{\rm GBS}$. The remaining users, distributed in peripheral disaster hotspots or coverage holes, form the set $\mathcal{G}_{\rm UAV} = \mathcal{G} \setminus \mathcal{G}_{\rm GBS}$. The primary focus of this study is the orchestration of the $N$ UAVs to efficiently serve the $M_{\rm UAV} = |\mathcal{G}_{\rm UAV}|$ edge users who lack reliable GBS coverage.

While the TCP accurately captures the formation of evacuation hotspots in generalized post-disaster scenarios, extreme mission-critical events (e.g., massive earthquakes) may completely obliterate underlying infrastructure, resulting in a chaotic and entirely unclustered distribution of victims. To mathematically accommodate such worst-case boundary conditions, the spatial model can seamlessly transition into a \textit{Uniform Random Distribution}, where user coordinates $\mathbf{x}_m$ are uniformly scattered across the operational area. Both spatial configurations will be rigorously evaluated to guarantee the comprehensive robustness of the proposed orchestration framework.

\subsection{UAV Mobility Model}
The system operates in discrete time steps indexed by $t = 1, 2, \dots, T$, where each time slot has a duration of $\delta_t$. The total mission duration $T$ is constrained by the limited on-board battery capacity of the rotary-wing UAVs. 

Let $\mathbf{q}_n(t) = [x_n(t), y_n(t), z_n(t)]^T \in \mathbb{R}^3$ denote the instantaneous 3D position of the $n$-th UAV ($u_n$) at time slot $t$. To ensure compliance with aviation safety regulations and minimize path loss, UAVs are constrained to fly within a designated altitude corridor:
\begin{equation}
    H_{\min} \leq z_n(t) \leq H_{\max}, \quad \forall n, t.
\end{equation}

The mobility of the UAV is modeled using discrete-time kinematics:
\begin{equation}
    \mathbf{q}_n(t+1) = \mathbf{q}_n(t) + \mathbf{v}_n(t) \cdot \delta_t,
\end{equation}
where $\mathbf{v}_n(t)$ represents the velocity vector at time $t$. The magnitude of the velocity is subject to a mechanical maximum speed constraint, $\|\mathbf{v}_n(t)\| \leq V_{\max}$.

Furthermore, to ensure the physical safety of the aerial fleet, a strict collision avoidance constraint is imposed. The Euclidean distance between any pair of UAVs must maintain a minimum safety clearance $d_{\min}$ at all times:
\begin{equation}
    \|\mathbf{q}_n(t) - \mathbf{q}_j(t)\| \geq d_{\min}, \quad \forall n \neq j \in \{1, \dots, N\}, \forall t.
\end{equation}

\subsection{Channel Propagation Models}
The wireless communication environment is characterized by two distinct channel models, corresponding to the terrestrial and aerial links.

\subsubsection{UAV-to-User Link (Air-to-Ground)}
For communication between UAVs and GUs, we adopt the probabilistic \textit{Air-to-Ground} (A2G) channel model proposed in~\cite{al2014modeling}. This model captures the likelihood of LoS and \textit{Non-Line-of-Sight} (NLoS) propagation as a function of the UAV's elevation angle.

The probability of establishing an LoS link between UAV $u_n$ and user $e_m$ is given by the sigmoid function:
\begin{equation}
    \mathbf{\text{Pr}_{\text{LoS}}}(n, m, t) = \frac{1}{1 + a \exp\left(-b(\theta_{n,m}(t) - a)\right)},
\end{equation}
where $a$ and $b$ are environment-dependent S-curve parameters (e.g., representing suburban, urban, or dense urban clutter). The elevation angle $\theta_{n,m}(t)$ (in degrees) is calculated as:
\begin{equation}
    \theta_{n,m}(t) = \frac{180}{\pi} \arcsin\left(\frac{z_n(t)}{d_{n,m}(t)}\right),
\end{equation}
where $d_{n,m}(t) = \|\mathbf{q}_n(t) - \mathbf{x}_m\|$ represents the 3D Euclidean distance between the UAV and the user (assuming the user is at ground level $z=0$). The NLoS probability is the complement, $\mathbf{\text{Pr}_{\text{NLoS}}} = 1 - \mathbf{\text{Pr}_{\text{LoS}}}$.

The associated average path loss (in dB) is formulated as the probabilistic weighted sum of LoS and NLoS components:
\begin{equation}
    L_{n,m}^{\rm A2G}(t) = \mathbf{\text{Pr}_{\text{LoS}}} L^{\text{LoS}}_{n,m} + (1 - \mathbf{\text{Pr}_{\text{LoS}}}) L^{\text{NLoS}}_{n,m}.
\end{equation}
Here, the specific path loss for each state $\xi \in \{\text{LoS, NLoS}\}$ is defined by the \textit{free-space path loss} (FSPL) augmented by an excessive path loss factor $\eta_{\xi}$:
\begin{equation}
    L^{\xi}_{n,m} = 20 \log_{10}(d_{n,m}(t)) + 20 \log_{10}(f_c) + 20 \log_{10}\left(\frac{4\pi}{c}\right) + \eta_{\xi},
\end{equation}
where $f_c$ is the carrier frequency, $c$ is the speed of light, and $\eta_{\text{LoS}}$ and $\eta_{\text{NLoS}}$ represent the average additional attenuation due to shadowing and scattering in LoS and NLoS conditions, respectively.

\subsubsection{GBS-to-User Link (Terrestrial)}
For users served by the GBS, the communication link typically encounters more obstacles than the A2G link. We model this using the log-distance path loss model with large-scale shadow fading. The path loss (in dB) between the GBS and user $e_m$ is expressed as:
\begin{equation}
    L_{0,m}^{\rm GBS}(t) = 20 \log_{10}\left(\frac{4\pi f_c}{c}\right) + 10 \beta \log_{10}(d_{0,m}) + \chi_{\sigma},
\end{equation}
where $d_{0,m} = \|\mathbf{q}_{0} - \mathbf{x}_m\|$ denotes the 3D terrestrial distance, $\beta$ is the path loss exponent (typically $\beta \approx 3.5 \sim 4$ for urban ground environments), and $\chi_{\sigma}$ represents the log-normal shadow fading component, modeled as a zero-mean Gaussian random variable with standard deviation $\sigma_{\mathrm{sh}}$.

\subsubsection{UAV-to-GBS Link (Backhaul)}
To guarantee a reliable connection to the core network, each UAV must maintain a robust backhaul link with the central GBS. Given the elevated altitudes of both the GBS antenna and the UAVs, the backhaul channel is heavily dominated by LoS propagation. The channel power gain is modeled following the free-space path loss principle. The backhaul \textit{Signal-to-Noise Ratio} (SNR) for UAV $u_n$ at time $t$ is formulated as:
\begin{equation}
    \gamma_n^{\text{BH}}(t) = \frac{P_{\text{GBS}} G_{\text{GBS}} G_{\text{UAV}} c^2}{(4\pi f_c d_{n,0}(t))^2 N_0 B_{\text{BH}}},
\end{equation}
where $P_{\text{GBS}}$ is the constant transmit power of the GBS, $G_{\text{GBS}}$ and $G_{\text{UAV}}$ represent the respective antenna gains, $c$ is the speed of light, $f_c$ is the carrier frequency, $d_{n,0}(t) = \|\mathbf{q}_n(t) - \mathbf{q}_0\|$ is the 3D Euclidean distance between the UAV and the GBS, and $B_{\text{BH}}$ is the dedicated backhaul bandwidth. This formulation ensures that the orchestration framework actively penalizes trajectories that venture beyond the maximum allowable backhaul range, thereby maintaining the structural integrity of the AGINs.

\color{black}
\subsection{Communication Model and User Association}
\label{subsec:comm_model}
Reliability and interference management are paramount in mission-critical public safety networks. To mitigate severe co-channel interference, we adopt a \textit{Frequency Division Multiple Access} (FDMA) scheme. The total available bandwidth $B_{\rm total}$ is partitioned into $N$ orthogonal sub-bands of bandwidth $B = B_{\rm total}/N$, with each sub-band exclusively assigned to a specific UAV. Consequently, inter-UAV interference is efficiently suppressed, allowing the orchestration framework to focus on trajectory and load balancing optimization.

The GBS serves the central user set $\mathcal{G}_{\text{GBS}}$. For edge users in $\mathcal{G}_{\text{UAV}}$, the association follows the Max-\textit{Received Signal Strength Indicator} (RSSI) policy. Let $\alpha_{n,m}(t) \in \{0, 1\}$ be the binary association variable, where $\alpha_{n,m}(t) = 1$ if user $e_m$ connects to UAV $u_n$. 
The dynamic load (number of served users) of UAV $u_n$ at time $t$ is denoted as $K_n(t) = \sum_{e_m \in \mathcal{G}_{\rm UAV}} \alpha_{n,m}(t)$.

Unlike the fixed GBS, UAVs perform transmit power control $P_n(t) \in [P_{\min}, P_{\max}]$. The SNR for user $e_m$ served by UAV $u_n$ over the entire sub-band is defined as:
\begin{equation}
    \gamma_{n,m}(t) = \frac{P_n(t) G_{\rm tx} G_{\rm rx} 10^{-L_{n,m}^{\rm A2G}(t)/10}}{N_0 B},
\end{equation}
where $N_0$ represents the noise power density and $B$ is the sub-band bandwidth.

To model realistic resource contention and mitigate intra-cell interference, we assume an \textit{Orthogonal Frequency Division Multiple Access} (OFDMA) scheme within each UAV cell. To facilitate rigorous mathematical optimization, we apply a continuous relaxation to the discrete subcarrier allocation. Let $\omega_{n,m}(t) \in (0, 1]$ denote the continuous fraction of bandwidth $B$ allocated to user $e_m$ by UAV $u_n$, subject to $\sum_{e_m \in \mathcal{G}_n(t)} \omega_{n,m}(t) \leq 1$. 

Furthermore, we assume a uniform power spectral density allocation, meaning that the transmit power is distributed proportionally across the allocated bandwidth. Specifically, the power and the thermal noise allocated to user $e_m$ are $\omega_{n,m}(t) P_n(t)$ and $\omega_{n,m}(t) N_0 B$, respectively. Since the fraction term $\omega_{n,m}(t)$ cancels out in the SNR calculation, the achievable data rate for user $e_m$ can be formulated as:
\begin{equation}
    \tilde{R}_m(t) = \sum_{u_n \in \mathcal{U}} \alpha_{n,m}(t) \omega_{n,m}(t) B \log_2\left(1 + \gamma_{n,m}(t)\right).
\end{equation}

For the baseline reinforcement learning environment, we adopt a standard equal bandwidth allocation strategy to isolate the performance gain of our spatial trajectory and load balancing orchestration. That is, we set $\omega_{n,m}(t) = 1/K_n(t)$, yielding the operational data rate:
\begin{equation}\label{eq:user_rate}
    R_m(t) = \sum_{u_n \in \mathcal{U}} \alpha_{n,m}(t) \frac{B}{K_n(t)} \log_2\left(1 + \gamma_{n,m}(t)\right).
\end{equation}
This formulation explicitly captures the impact of traffic congestion: as a UAV serves more users (i.e., increasing $K_n(t)$), the per-user data rate strictly decreases. Consequently, the network must inherently balance between maximizing signal strength (proximity) and minimizing service congestion (load balancing). More importantly, defining $\omega_{n,m}(t)$ as a continuous variable establishes the fundamental mathematical basis for our rigorous fairness analysis using the bargaining matching game in Section~\ref{sec:theoretical_analysis}.

\subsection{UAV Energy Consumption Model}
\label{subsec:energy_model}
In mission-critical operations, the endurance of rotary-wing UAVs is strictly constrained by their onboard battery capacity. The total power consumption of UAV $u_n$ at time $t$, denoted as $P_{\text{total},n}(t)$, consists of two main components: the communication transmit power $P_n(t)$ and the \textit{propulsion power} $P_{\text{prop},n}(t)$ required for hovering and flying. According to the classical aerodynamic model for rotary-wing UAVs \cite{zeng2019energy}, the propulsion power is formulated as a function of the UAV velocity $\|\mathbf{v}_n(t)\|$:
\begin{equation}
\begin{split}
    P_{\text{prop},n}(t) = & \ P_0 \left( 1 + \frac{3 \|\mathbf{v}_n(t)\|^2}{U_{\text{tip}}^2} \right) \\
    & + P_i \left( \sqrt{1 + \frac{\|\mathbf{v}_n(t)\|^4}{4 v_0^4}} - \frac{\|\mathbf{v}_n(t)\|^2}{2 v_0^2} \right)^{1/2} \\
    & + \frac{1}{2} d_{\text{fuse}} \rho s A \|\mathbf{v}_n(t)\|^3,
\end{split}
\end{equation}
where $P_0$ and $P_i$ represent the blade profile power and induced power in hovering status, respectively. $U_{\text{tip}}$ is the tip speed of the rotor blade, $v_0$ is the mean rotor induced velocity in hover, $d_{\text{fuse}}$ is the fuselage drag ratio, $\rho$ is the air density, $s$ is the rotor solidity, and $A$ is the rotor disc area. Since $P_{\text{prop},n}(t)$ typically dominates the total energy budget (often in the scale of hundreds of Watts) compared to the transmit power $P_n(t)$ (in the scale of milliwatts), incorporating the propulsion model is essential for a realistic evaluation of energy efficiency. The total instantaneous power is thus given by:
\begin{equation}
    P_{\text{total},n}(t) = P_n(t) + P_{\text{prop},n}(t).
\end{equation}

\color{black}
\subsection{Problem Formulation: Multi-Objective Orchestration}
The primary objective of this study is to orchestrate the UAV fleet to simultaneously satisfy multiple, often conflicting, service requirements: maximizing user coverage, enhancing spectral efficiency, and ensuring load and rate fairness among edge users. To systematically address these trade-offs, we adopt a weighted sum method to construct a global utility function $U_{\text{total}}(t)$, defined as:
\begin{equation}\label{eq:global_utility}
    U_{\text{total}}(t) = w_1 U_{\text{cov}}(t) + w_2 U_{\text{ee}}(t) + w_3 U_{\text{load}}(t) + w_4 U_{\text{fair}}(t),
\end{equation}
where $w_1, \dots, w_4$ are non-negative weighting factors representing the relative priority of each metric. 

The utility components capture distinct aspects of network performance. $U_{\text{cov}}(t)$ addresses coverage maximization. Crucially, given the limited battery capacity, we prioritize energy efficiency over raw throughput. The EE utility $U_{\text{ee}}(t)$ is formulated as the ratio of aggregate system throughput to the total transmission power, incentivizing the fleet to provide high-quality service with minimal energy consumption.

Furthermore, to enforce the ``no-user-left-behind'' policy, we incorporate two fairness-oriented terms: $U_{\text{load}}(t)$ and $U_{\text{fair}}(t)$. These terms apply JFI to the UAV load distribution and user data rates, respectively, preventing service starvation and ensuring a balanced workload across the aerial fleet.

Let $\mathbf{Q} = \{\mathbf{q}_n(t)\}$, $\mathbf{P} = \{P_n(t)\}$, and $\boldsymbol{\alpha} = \{\alpha_{n,m}(t)\}$ denote the trajectory, power control, and user association variables over the mission horizon $T$. The joint multi-objective optimization problem is formulated as:
\begin{subequations}
\begin{align}
    (\textbf{P1}): \quad \max_{\mathbf{Q}, \mathbf{P}, \boldsymbol{\alpha}} \quad & \frac{1}{T} \sum_{t=1}^T U_{\text{total}}(t) \label{obj:multi} \\
    \text{s.t.} \quad & \|\mathbf{q}_n(t+1) - \mathbf{q}_n(t)\| \leq V_{\max} \delta_t, \quad \forall n, t, \label{const:vel} \\
    & \|\mathbf{q}_n(t) - \mathbf{q}_j(t)\| \geq d_{\min}, \quad \forall n \neq j, t, \label{const:col} \\
    & H_{\min} \leq z_n(t) \leq H_{\max}, \quad \forall n, t, \label{const:alt} \\
    & \mathbf{q}_n(t) \in [0, D] \times [0, D] \times \mathbb{R}, \quad \forall n, t, \label{const:bound} \\
    & P_{\min} \leq P_n(t) \leq P_{\max}, \quad \forall n, t, \label{const:power} \\
    & \alpha_{n,m}(t) \in \{0, 1\}, \quad \forall n, m, t, \label{const:binary} \\
    & \sum_{u_n \in \mathcal{U}} \alpha_{n,m}(t) \leq 1, \quad \forall m \in \mathcal{G}_{\rm UAV}, t. \label{const:assoc}
\end{align}
\end{subequations}

The optimization is governed by strict physical and operational constraints. The kinematic constraint \eqref{const:vel} limits the maximum displacement per time slot, reflecting the mechanical speed limit $V_{\max}$. To ensure flight safety, the non-convex collision avoidance constraint \eqref{const:col} enforces a minimum separation distance $d_{\min}$ between UAVs. Spatially, the fleet is confined within the designated mission area ($D \times D$) and altitude corridor by constraints \eqref{const:alt} and \eqref{const:bound}. Regarding resource allocation, constraint \eqref{const:power} limits the transmit power to the operational range. Finally, constraints \eqref{const:binary} and \eqref{const:assoc} define the user association logic, ensuring that each edge user is served by at most one UAV at any given instant.

\subsection{Problem Complexity}
Problem (P1) is a non-convex \textit{Mixed-Integer Non-Linear Programming} (MINLP) problem. The complexity arises from the coupling of continuous trajectory variables with discrete binary association variables, the non-convexity of the collision avoidance constraints, and the conflicting nature of the multi-objective function. Since deriving a global optimum using traditional convex optimization or exhaustive search is mathematically intractable for large-scale dynamic networks, this necessitates the development of the proposed MARL-based ORCHID framework to obtain high-quality solutions efficiently.

\begin{figure*}[t]
\centering
\resizebox{\textwidth}{!}{
\begin{tikzpicture}[
    node distance=1.6cm and 1.3cm,
    >=Stealth,
    font=\sffamily\normalsize,
    block/.style={draw, rectangle, rounded corners, minimum width=2.6cm, minimum height=1.3cm, align=center, fill=white, line width=1pt},
    process/.style={block, fill=blue!10, draw=blue!60, very thick},
    marl/.style={block, fill=orange!10, draw=orange!60, very thick, minimum width=3.4cm, minimum height=1.6cm},
    decision/.style={draw, diamond, aspect=1.8, fill=yellow!10, draw=yellow!60, very thick, align=center, inner sep=2pt, minimum width=2.2cm},
    mechanism/.style={block, fill=red!5, draw=red!60, dashed, very thick, minimum width=3.8cm},
    input/.style={trapezium, trapezium left angle=70, trapezium right angle=110, draw, fill=gray!10, very thick, minimum width=2.4cm, align=center},
    group/.style={draw, dashed, inner sep=0.6cm, rounded corners, fill opacity=0.3, line width=0.8pt}
]

    \node[input] (users) {User Locations\\$\mathcal{X}$ (TCP)};

    \node[process, right=1.6cm of users] (kmeans) {\textbf{GBS-Aware}\\\textbf{Heterogeneous Clustering}};

    \node[block, below=1.5cm of kmeans, text width=3cm] (centroids) {Centroids $\boldsymbol{\mu}_n$\\$\downarrow$\\Initial Pos $\mathbf{q}(0)$};
    
    \node[marl, right=3.5cm of kmeans] (actor) {\textbf{Fairness-Aware MARL Optimization}\\Centralized Critic \& Distributed Actors};
    
    \node[block, above=1.5cm of actor, fill=green!5, draw=green!60] (env) {Environment\\(Kinematics \& Channel)};
    
    \node[block, below=1.5cm of actor, fill=purple!5, draw=purple!60, text width=4.5cm] (reward) {Multi-Objective Reward\\$r_n = \sum w_i r^{(i)} - w_5 r^{\text{pen}}$\\{\footnotesize (Cov, EE, Load, Rate)}};

    \node[decision, right=1.8cm of actor] (check) {Stability\\Check\\(Multi-Metric?)};
    
    \node[mechanism, above=1cm of env, text width=4.25cm] (reset) {\textbf{Reset-and-Finetune}\\Reset Optimizer ($\mathbf{m}, \mathbf{v}$)\\Decay Rate ($\kappa$)};

    \node[input, right=2.2cm of check] (output) {Optimized\\Policy $\pi^*$};

    \coordinate (limit_left) at ([xshift=-2.5cm]env.west);
    \coordinate (limit_right) at ([xshift=0.15cm]check.east);
    \coordinate (limit_top1) at ([yshift=0.2cm]kmeans.north);
    \coordinate (limit_top2) at ([yshift=0.2cm]reset.north);

    
    \draw[->, very thick] (users) -- (kmeans);
    \draw[->, very thick] (kmeans) -- (centroids);
    \draw[->, very thick] (centroids.east) -- ++(1.75,0) |- ([yshift=-0.4cm]actor.west);

    \draw[->, very thick] (actor.north) -- node[right, font=\small] {Action $\mathbf{a}_n(t)$} (env.south);
    \draw[->, very thick] (env.east) -- ++(1.7,0) |- node[pos=0.25, right, font=\small, yshift=2pt] {Obs. $\mathbf{o}_n(t)$} ([yshift=0.5cm]actor.east);
    
    \draw[->, very thick] (env.west) -- ++(-1.8,0) |- (reward.west);
    \draw[->, very thick] (reward.north) -- node[right, font=\small] {Reward $r_n$} (actor.south);

    \draw[->, very thick] (actor.east) -- (check.west);
    \draw[->, very thick] (check.east) -- node[above, font=\small] {Converged} (output.west);
    
    \draw[->, red, line width=2pt, rounded corners=15pt] (check.north) -- node[right, color=red, font=\small] {Trigger Reset} ++(0, 3) |- (reset.east);
    \draw[->, red, line width=2pt, rounded corners=15pt] (reset.west) -| ([xshift=-2.5cm]env.west) |- ([yshift=0.4cm]actor.west);

    \begin{pgfonlayer}{background}
        \node[fit=(kmeans)(centroids)(limit_top1), group, fill=blue!5, label={[blue!80, font=\large\bfseries, anchor=north, yshift=-0.1cm]north:{Phase I: Initialization}}] {};
        
        \node[fit=(actor)(env)(reward)(reset)(check)(limit_left)(limit_top2)(limit_right), group, fill=orange!5, label={[orange!80, font=\large\bfseries, anchor=north, yshift=-0.1cm]north:{Phase II: Resilient Fine-Tuning}}] {};
    \end{pgfonlayer}

\end{tikzpicture}
}
\caption{The overall architecture of the proposed ORCHID framework. The process is divided into two sequential stages: Phase I (Initialization) and Phase II (Resilient Fine-Tuning). In Phase I, a GBS-aware heterogeneous clustering strategy partitions user locations $\mathcal{X}$ into $N+1$ clusters to define the service scopes for the fixed GBS and $N$ mobile UAVs, providing initial positions for UAV deployment. In Phase II, a Fairness-Aware MARL Optimization is conducted where a MAPPO-based agent dynamically coordinates resource allocation and UAV trajectories. To ensure robustness, an R\&F mechanism monitors the multi-metric plateau (JFI and system utility) to trigger an optimizer reset (clearing $\mathbf{m}, \mathbf{v}$) and learning rate decay (by factor $\kappa$) whenever a convergence plateau is detected.}
\label{fig:framework}
\end{figure*}

\section{The Proposed ORCHID Framework}
\label{sec:orchid}

In this section, we propose \textit{Orchestration of Resilient Coverage via Hybrid Intelligent Deployment (ORCHID)}, a stability-enhanced two-stage learning framework. As illustrated in Fig.~\ref{fig:framework}, the proposed architecture synergizes unsupervised topology learning with on-policy MARL optimization. This hybrid design enables the UAV fleet to helps alleviate the exploration cold-start problem and facilitates stable policy convergence that balances spectral efficiency with the ``no-user-left-behind'' fairness requirement.

\subsection{Framework Overview}
The orchestration process operates in two sequential phases to decouple global partitioning from fine-grained local optimization:
\begin{itemize}
    \item \textbf{Phase I (Initialization):} This phase aims to reduce the high-dimensional exploration overhead of MARL by employing a GBS-Aware Heterogeneous Clustering strategy. Unlike vanilla clustering, it prioritizes user hotspots situated in GBS coverage holes, partitioning the environment into $N+1$ functional zones. This ensures that the $N$ UAVs are initialized at positions with the highest potential demand, providing an instantaneous, zero-latency warm-start for the subsequent optimization without solving computationally prohibitive channel models offline.
    \item \textbf{Phase II (Resilient Fine-Tuning):} This phase executes a Fairness-Aware MARL Optimization process using a MAPPO-based agent to jointly optimize UAV trajectories and power allocation. To ensure robustness against the non-stationarity of the wireless environment, a R\&F mechanism is integrated as a core resiliency feature. By monitoring the JFI stability, the mechanism adaptively resets the optimizer state to maintain convergence and prevent performance degradation even during late-stage policy fluctuations.
\end{itemize}

\subsection{Phase I: GBS-Aware Heterogeneous Initialization}
\label{subsec:phase1}

Standard RL algorithms typically initialize agents with random policies. In the continuous coverage control problem, this leads to erratic trajectories and excessive exploration time, known as the \textit{cold-start problem}. To mitigate this, ORCHID leverages the underlying user spatial configuration to perform a GBS-aware initialization.

We formulate the initial deployment problem as a clustering optimization task. Let $\mathcal{X} = \{\mathbf{x}_1, \dots, \mathbf{x}_M\}$ denote the set of horizontal coordinates of all GUs $\mathcal{G} = \{e_1, \dots, e_M\}$. 
In mission-critical post-disaster scenarios, obtaining the precise 2D coordinates $\mathcal{X}$ of ground users via traditional terrestrial infrastructure is practically challenging. To address this, the ORCHID framework assumes a rapid preliminary sensing sweep conducted by the aerial fleet. By leveraging emerging \textit{Integrated Sensing and Communication} (ISAC) technologies \cite{liu2022integrated, cui2021integrating}, UAVs can employ onboard radar or collaborative vision-based sensing \cite{tang2025mulvr} to passively detect the physical clustering of victims and extract their coarse coordinates. It is crucial to note that Phase I only requires a coarse estimation of the spatial distribution to mitigate the exploration cold-start problem. Any inherent localization errors during this rapid acquisition phase can be effectively mitigated by the dynamic topological tracking capabilities of the MAPPO agents in Phase II.

To account for the \textit{heterogeneous network} (HetNet) structure, we adopt the following partitioning strategy.
First, we perform \textit{Topology Partitioning}. We partition the user set $\mathcal{G}$ into $K = N+1$ clusters using the K-Means++ algorithm. This minimizes the \textit{Within-Cluster Sum of Squares} (WCSS), defined as:
\begin{equation}
    \text{WCSS} = \sum_{k=1}^{N+1} \sum_{e_m \in C_k} \|\mathbf{x}_m - \boldsymbol{\mu}_k\|^2,
\end{equation}
where $C_k \subset \mathcal{G}$ represents the subset of users in the $k$-th cluster, and $\boldsymbol{\mu}_k \in \mathbb{R}^2$ denotes the centroid of cluster $C_k$.

Next, we apply \textit{GBS Filtering}. We calculate the Euclidean distance between each cluster centroid $\boldsymbol{\mu}_k$ and the fixed GBS location $\mathbf{q}_0$ (projected to the 2D plane). The cluster centroid nearest to the GBS is identified as the terrestrially served zone and is excluded from the UAV target list.

Finally, we perform \textit{UAV Assignment}. The remaining $N$ centroids determine the initial horizontal coordinates of the $N$ UAVs. Formally, let $\mathcal{M}_{\rm UAV}$ be the set of the $N$ selected centroids. The initial 3D position of the $n$-th UAV is set as:
\begin{equation}
    \mathbf{q}_n(0) = [\boldsymbol{\mu}_n^T, H_{\text{init}}]^T, \quad \forall \boldsymbol{\mu}_n \in \mathcal{M}_{\rm UAV},
\end{equation}
where $H_{\text{init}}$ is a pre-configured cruising altitude. This strategy effectively maps the UAVs to the high-density centroids of the traffic distribution while avoiding redundant coverage over the GBS area.

\subsection{Phase II: Fairness-Aware MARL Optimization}
\label{subsec:phase2}
Following the coarse initialization, the problem of fine-grained trajectory design and power control is formulated as a \textit{Decentralized Partially Observable Markov Decision Process} (DEC-POMDP). We employ the MAPPO algorithm, which operates under the CTDE paradigm.

\subsubsection{Observation and State Spaces}
In the execution phase, each UAV agent $n \in \mathcal{U}$ makes decisions based solely on its local observation $\mathbf{o}_n(t)$. The local observation vector encapsulates the agent's kinematic status, environmental perception, and critical link status:
\begin{equation}
\begin{split}
    \mathbf{o}_n(t) = \Big[ & \tilde{\mathbf{q}}_n(t), \tilde{P}_n(t), \tilde{d}_{n,\text{min}}(t), \tilde{K}_n(t), \\
    & \mathbf{m}_n(t), \mathbf{v}_n^{\text{viol}}(t), \tilde{\gamma}_n^{\text{BH}}(t) \Big]^T.
\end{split}
\end{equation}
The components are defined as follows: $\tilde{\mathbf{q}}_n(t)$ and $\tilde{P}_n(t)$ are the normalized 3D position and transmit power; $\tilde{d}_{n,\text{min}}(t)$ is the normalized distance to the nearest ground user; $\tilde{K}_n(t)$ represents the normalized local served user count; $\mathbf{m}_n(t)$ denotes safety margins relative to boundaries; $\mathbf{v}_n^{\text{viol}}(t)$ tracks historical constraint violations; and $\tilde{\gamma}_n^{\text{BH}}(t)$ is the normalized SNR of the backhaul link to the GBS, enabling the agent to monitor connectivity quality.

During centralized training, the critic network utilizes the global state $\mathbf{s}(t)$, which concatenates all local observations with global context: 
\begin{equation}
    \mathbf{s}(t) = \{ \mathbf{o}_1(t), \dots, \mathbf{o}_N(t), \mathbf{u}_{\text{dist}}, \mathcal{J}(t) \},
\end{equation}
where $\mathbf{u}_{\text{dist}}$ represents the global distribution of users, and $\mathcal{J}(t)$ is the current global fairness index.

\subsubsection{Continuous Action Space}
To enable smooth and precise maneuvering, we employ a continuous action space. The action $\mathbf{a}_n(t) \in \mathbb{R}^4$ for UAV agent $u_n$ is defined as:
\begin{equation}
    \mathbf{a}_n(t) = [\Delta \tilde{x}_n, \Delta \tilde{y}_n, \Delta \tilde{z}_n, \Delta \tilde{P}_n]^T,
\end{equation}
where each component corresponds to the normalized adjustment in the x, y, z coordinates and transmit power, respectively. These values are squashed into $[-1, 1]$ via a $\tanh$ activation function. The physical update rule follows:
\begin{equation}
    \begin{cases}
        \mathbf{q}_n(t+1) = \mathbf{q}_n(t) + \mathbf{V}_{\text{scale}} \odot [\Delta \tilde{x}_n, \Delta \tilde{y}_n, \Delta \tilde{z}_n]^T, \\
        P_n(t+1) = P_n(t) + \delta_p \cdot \Delta \tilde{P}_n,
    \end{cases}
\end{equation}
where $\mathbf{V}_{\text{scale}}$ determines the maximum displacement per step, $\delta_p$ is the power adjustment step size, and $\odot$ denotes the element-wise product.

\subsubsection{Multi-Objective Reward Mechanism}
\label{subsubsec:reward}

To align the agents' behavior with the system objectives, we design a comprehensive reward function. The instantaneous reward $r_n(t)$ for UAV agent $u_n$ at time step $t$ is a weighted sum of four performance metrics and a penalty term:
\begin{equation}\label{eq:hybrid_reward}
\begin{split}
    r_n(t) = \ & w_1 r^{\text{cov}}(t) + w_2 r^{\text{ee}}(t) + w_3 r^{\text{load}}(t) \\
    & + w_4 r^{\text{rate}}(t) - w_5 r^{\text{pen}}_n(t),
\end{split}
\end{equation}
where $w_1, \dots, w_5$ are positive weighting factors that balance the relative importance of coverage, energy efficiency, load balance, rate fairness, and safety constraints, respectively. Notably, the multi-objective weights $w_1, \dots, w_4$ utilized here are mathematically identical to those defined in the global utility function~\eqref{eq:global_utility}.

Instead of employing empirical static values, the proposed ORCHID framework adopts a progress-dependent dynamic weight scheduling mechanism for these five coefficients. Let $\tau = e / E_{\max} \in [0, 1]$ denote the normalized training progress, where $e$ is the current episode index and $E_{\max}$ is the total training episodes. During the initial exploration phase, establishing a globally equitable topological coverage is strictly prioritized over fine-grained energy saving. Therefore, the fairness objective weights are initialized at high values and linearly decayed as $w_3 = w_4 = 0.8 - 0.3\tau$. Concurrently, the energy efficiency weight is gradually increased as $w_2 = 0.2 + 0.3\tau$. The coverage bonus weight $w_1$ is fixed at 0.5 to constantly incentivize full service ratios. Finally, to ensure strict compliance with the physical bounds, the penalty scale is progressively tightened as $w_5 = 0.1 + 0.2\tau$.

\paragraph{Coverage Rate ($r^{\text{cov}}$)}
To minimize the number of unserved users, this global reward component encourages the fleet to maximize the ratio of covered users:
\begin{equation}\label{eq:cov_reward}
    r^{\text{cov}}(t) = \frac{\sum_{u_n \in \mathcal{U}} |\mathcal{G}_n(t)|}{M_{\text{UAV}}},
\end{equation}
where $\mathcal{G}_n(t)$ denotes the set of edge users currently served by UAV $u_n$, and $M_{\text{UAV}}$ represents the total number of target users requiring service.

\paragraph{Energy Efficiency ($r^{\text{ee}}$)}
To prolong operational endurance, this component incentivizes maximizing the global bit-per-Joule efficiency. Incorporating both communication and propulsion energy costs, the energy efficiency reward is defined as:
\begin{equation}\label{eq:ee_reward}
    r^{\text{ee}}(t) = \text{Norm}\left( \frac{\sum_{n=1}^N \sum_{e_m \in \mathcal{G}_n(t)} R_m(t)}{\sum_{n=1}^N P_{\text{total},n}(t) + \epsilon} \right),
\end{equation}
where $R_m(t)$ is the data rate of user $e_m$, $P_{\text{total},n}(t)$ is the total power consumption of UAV $u_n$, and $\epsilon$ is a small constant to prevent division by zero. The function $\text{Norm}(\cdot)$ performs normalization (e.g., Min-Max scaling) to map the raw EE value to a stable range suitable for neural network training.

\paragraph{UAV Load Fairness ($r^{\text{load}}$)}
To prevent load imbalance among UAVs, we apply JFI to the number of served users $K_n(t) = |\mathcal{G}_n(t)|$:
\begin{equation}\label{eq:load_fairness_reward}
    r^{\text{load}}(t) = \frac{\left(\sum_{n=1}^N K_n(t)\right)^2}{N \sum_{n=1}^N K_n(t)^2},
\end{equation}
where $N$ is the total number of UAVs in the fleet. This term penalizes configurations where some UAVs are idle while others are overloaded.

\paragraph{UE Capacity Fairness ($r^{\text{rate}}$)}
To enforce equitable service quality at the user level, we apply JFI to the users' achieved data rates:
\begin{equation}\label{eq:rate_fairness_reward}
    r^{\text{rate}}(t) = \mathcal{J}(\mathbf{R}(t)) = \frac{\left(\sum_{m \in \mathcal{G}_{\text{served}}} R_m(t)\right)^2}{|\mathcal{G}_{\text{served}}| \sum_{m \in \mathcal{G}_{\text{served}}} R_m(t)^2},
\end{equation}
where $\mathcal{G}_{\text{served}}$ denotes the subset of users currently receiving service ($R_m > 0$), and $\mathcal{J}(\cdot)$ represents the JFI function.

\paragraph{Constraint Penalties ($r^{\text{pen}}$)}
To ensure physical feasibility and connectivity reliability, the penalty term aggregates critical violations:
\begin{equation}\label{eq:penalty_reward}
\begin{split}
    r^{\text{pen}}_n(t) = & \ \lambda_c \sum_{j \neq n} \mathbb{I}(d_{nj} < d_{\min}) + \lambda_b \mathbb{I}(\mathbf{q}_n \notin \mathcal{B}) \\
    & + \lambda_{bh} \mathbb{I}(\gamma_n^{\text{BH}}(t) < \gamma_{\text{th}}).
\end{split}
\end{equation}
Here, $\mathbb{I}(\cdot)$ is the binary indicator function which equals 1 if the condition is met, and 0 otherwise. The coefficients $\lambda_c, \lambda_b, \lambda_{bh}$ represent the penalty intensities for inter-UAV collision ($d_{nj} < d_{\min}$), boundary overstepping ($\mathbf{q}_n \notin \mathcal{B}$), and weak backhaul connectivity ($\gamma_n^{\text{BH}} < \gamma_{\text{th}}$), respectively.

\color{black}
\paragraph{Optimization Goal (Cumulative Return)}
The ultimate goal of the RL agent is to maximize the expected discounted cumulative return over the mission horizon $T$:
\begin{equation}\label{eq:cumulative_return}
    J(\pi_\theta) = \mathbb{E}_{\pi_\theta} \left[ \sum_{t=0}^{T} \gamma_{\mathrm{df}}^t r_n(t) \right],
\end{equation}
where $\pi_\theta$ is the policy parameterized by $\theta$, and $\gamma_{\mathrm{df}} \in [0, 1)$ is the discount factor determining the importance of future rewards.

\subsubsection{MAPPO Learning Objective}
To solve this optimization problem, the policy $\pi_{\theta}$ and value function $V_{\phi}$ are updated iteratively. The actor is updated by maximizing the clipped surrogate objective:
\begin{equation}
\begin{split}
    \mathcal{L}(\theta) = \mathbb{E}_t \Big[ & \min \Big( r_t(\theta) \hat{A}_t, \\
    & \text{clip}(r_t(\theta), 1-\epsilon_{\text{clip}}, 1+\epsilon_{\text{clip}}) \hat{A}_t \Big) \Big] + c_e H(\pi_\theta),
\end{split}
\end{equation}
where $r_t(\theta)$ is the probability ratio between new and old policies, $\hat{A}_t$ is the \textit{Generalized Advantage Estimation} (GAE), $\epsilon_{\text{clip}}$ is the clipping parameter to limit policy updates, and $c_e$ is the entropy coefficient regulating the exploration-exploitation trade-off via the entropy term $H(\pi_\theta)$.

\subsection{Stability Enhancement: The Reset-and-Finetune Mechanism}
\label{subsec:reset_finetune}

A critical challenge in on-policy MARL is \textit{policy degradation} during late training stages~\cite{papoudakis2019dealing}. As agents converge towards a local optimum, adaptive optimizers such as Adam accumulate historical momentum from the early exploration phase. This historical inertia can inadvertently drive the policy away from a precise optimum, a phenomenon akin to the \textit{primacy bias} observed in deep RL~\cite{nikishin2022primacy}. To counteract this, ORCHID introduces a formal R\&F mechanism.

Let $\mathcal{J}_e$ and $U_e$ denote the global JFI score and the cumulative total reward at episode $e$, respectively. We quantify the training stability via a sliding window moving average over a window size $W$:
\begin{equation}\label{eq:moving_avg}
    \bar{\mathcal{J}}_e = \frac{1}{W} \sum_{k=0}^{W-1} \mathcal{J}_{e-k}, \quad \bar{U}_e = \frac{1}{W} \sum_{k=0}^{W-1} U_{e-k}.
\end{equation}
To prevent premature or delayed triggering caused by monitoring a single metric, we propose a \textit{multi-metric fused trigger condition}. The Stabilization Phase is dynamically triggered at episode $e^*$. This intervention occurs when the relative performance gains of both fairness and holistic system utility drop below their respective convergence thresholds ($\epsilon_{\text{tol}}^{\mathcal{J}}$ and $\epsilon_{\text{tol}}^{U}$). This simultaneous drop indicates a genuine topological plateau:
\begin{equation}
    e^* = \min \left\{ e \mid \left( \frac{|\bar{\mathcal{J}}_e - \bar{\mathcal{J}}_{e-W}|}{\bar{\mathcal{J}}_{e-W}} < \epsilon_{\text{tol}}^{\mathcal{J}} \right) \land \left( \frac{|\bar{U}_e - \bar{U}_{e-W}|}{\bar{U}_{e-W}} < \epsilon_{\text{tol}}^{U} \right) \right\}.
\end{equation}
This fused condition significantly enhances the robustness of the R\&F mechanism. It ensures that the optimizer is reset only when both service equity and overall energy efficiency have reached a consensus equilibrium, thereby reducing the risk of the agents being trapped in transient local optima or suffering from delayed triggering.

Upon triggering at $e=e^*$, we execute two synchronized operations to enforce a hard transition from global exploration to local exploitation.

\color{black}
\paragraph{Optimizer State Reset}
The standard Adam optimizer maintains the first moment estimate $\mathbf{m}_t$ (momentum) and the second raw moment estimate $\mathbf{v}_t$ (variance). In dynamic UAV coverage, gradients collected during the early high-velocity exploration phase are often uncorrelated with the fine-grained adjustments required in the final stages~\cite{ash2020warm}. We explicitly decouple the optimization trajectory by resetting these internal states for all network parameters:
\begin{equation}
    \mathbf{m}_{t} \leftarrow \mathbf{0}, \quad \mathbf{v}_{t} \leftarrow \mathbf{0}, \quad \text{at } t = t(e^*),
\end{equation}
where $t(e^*)$ denotes the time step corresponding to the start of episode $e^*$. This effectively clears the historical "inertia," forcing the optimizer to respond solely to the local curvature of the current loss landscape.

\paragraph{Synchronized Learning Rate Decay}
Simultaneously, to enable precise micro-adjustments without overshooting the local optimum, the learning rates $\eta \in \{\eta_{\text{actor}}, \eta_{\text{critic}}\}$ undergo a Heaviside step decay:
\begin{equation}
    \eta_e =
    \begin{cases}
        \eta_{\text{init}}, & \text{if } e < e^* \\
        \kappa \cdot \eta_{\text{init}}, & \text{if } e \ge e^* \end{cases},
\end{equation}
where $\kappa$ is the damping factor (set to $0.1$ in our implementation). This substantial reduction is theoretically necessitated by the optimizer reset. Without historical momentum to smooth out stochastic gradients, maintaining a high learning rate would introduce severe variance, potentially preventing convergence~\cite{reddi2018convergence}. Mathematically, by scaling $\eta$ by $\kappa$, the variance of the parameter updates is suppressed quadratically:
\begin{equation}
    \text{Var}(\Delta \boldsymbol{\theta}') \approx \kappa^2 \cdot \text{Var}(\Delta \boldsymbol{\theta}).
\end{equation}
Given our implementation of $\kappa = 0.1$, this formulation yields a substantial variance reduction of two orders of magnitude. Such a numerical damper effectively locks the aerial fleet into a low-variance equilibrium, thereby validating the robustness of the R\&F mechanism.

\subsection{Algorithm Description and Complexity Analysis}
\label{subsec:algorithm_complexity}

The complete training and execution procedure of the proposed ORCHID framework is summarized in Algorithm~\ref{alg:orchid}. The process naturally divides into the offline optimization phase and the online execution phase.

\subsubsection{Offline Complexity (Initialization \& Training)}
The computational burden is primarily concentrated in the pre-deployment phase. Phase I utilizes the GBS-aware K-Means strategy with a complexity of $\mathcal{O}(I \cdot (N+1) \cdot M)$, where $I$ is the number of iterations and $M$ is the number of users. In Phase II, the centralized training complexity scales as $\mathcal{O}(E_{\max} \cdot T \cdot N \cdot B)$, where $B$ is the batch size for PPO updates. Although extensive, the embedded R\&F operations involve only scalar resets and scaling, adding negligible $\mathcal{O}(1)$ overhead. Crucially, these training tasks are executed on a powerful centralized server prior to mission deployment, thus introducing zero operational latency.

\subsubsection{Online Complexity (Real-time Inference)}
For dynamic deployment, the critical metric is the inference latency on onboard processors. Each UAV independently executes a forward pass through its Actor network. With network depth $D_{\text{net}}$ and width $H_l$, the decision complexity is $\mathcal{O}(\sum_{l=1}^{D_{\text{net}}-1} H_l \cdot H_{l+1})$. Given our lightweight MLP architecture (e.g., $H=256$), this operation involves only simple matrix-vector multiplications, which are executable in milliseconds on embedded edge devices.

\begin{algorithm2e}[t]
\caption{ORCHID: Two-Stage Orchestration Framework}
\label{alg:orchid}
\SetAlgoLined
\KwIn{User Location Set $\mathcal{X}$, GBS Position $\mathbf{q}_0$, No. of UAVs $N$, Max Episodes $E_{\max}$, Total Steps $T$, R\&F Window Size $W$, Stability Threshold $\epsilon_{\text{tol}}$, Learning Rates $\eta_{\text{actor}}, \eta_{\text{critic}}$, Decay Factor $\kappa$.}
\KwOut{Optimized Policy $\pi^*_\theta$.}

\tcc{Phase I: Initialization via Heterogeneous Clustering}
Run K-Means++ on user set $\mathcal{X}$ to generate $N+1$ centroids\;
Identify centroid $\boldsymbol{\mu}_{\text{near}}$ closest to $\mathbf{q}_0$ and discard it (GBS Filtering)\;
Assign remaining $N$ centroids to UAVs: $\mathbf{q}_{1:N}(0) \leftarrow \boldsymbol{\mu}_{\text{rem}}$\;

\tcc{Phase II: Resilient Fine-Tuning (MAPPO with R\&F)}
Initialize Actor $\pi_\theta$, Critic $V_\phi$, and Rollout Buffer $\mathcal{B}$\;
\For{episode $e = 1$ \KwTo $E_{\max}$}{
    Reset environment (regenerate users $\mathcal{X}$ and re-execute Phase I) and receive initial state $\mathbf{s}_0$\;
    \For{step $t = 1$ \KwTo $T$}{
        Each UAV $u_n$ observes $\mathbf{o}_n(t)$ and executes $\mathbf{a}_n(t) \sim \pi_\theta(\cdot|\mathbf{o}_n(t))$\;
        Env transitions to $\mathbf{s}_{t+1}$ and calculates reward $r_n(t)$\;
        Store transition $(\mathbf{s}_t, \mathbf{a}_t, \mathbf{r}_t, \mathbf{s}_{t+1})$ in $\mathcal{B}$\;
    }
    
    \tcc{Policy Update and Stability Enhancement}
    Update $\theta, \phi$ using MAPPO objective on collected trajectories from $\mathcal{B}$\;
    Update sliding window averages $\bar{\mathcal{J}}_e$ and $\bar{U}_e$ via~\eqref{eq:moving_avg}\;
    \If{Multi-metric plateau detected ($\Delta\bar{\mathcal{J}}_e < \epsilon_{\text{tol}}^{\mathcal{J}}$ \textbf{and} $\Delta\bar{U}_e < \epsilon_{\text{tol}}^{U}$) \textbf{and} not reset yet}{
        Reset Adam optimizer states: $\mathbf{m} \leftarrow \mathbf{0}, \mathbf{v} \leftarrow \mathbf{0}$\;        
        \tcc{Synchronized Learning Rate Decay}
        $\eta_{\text{actor}} \leftarrow \kappa \cdot \eta_{\text{actor}}$\; 
        $\eta_{\text{critic}} \leftarrow \kappa \cdot \eta_{\text{critic}}$\;
        Mark as R\&F triggered\;
    }
}
\end{algorithm2e}

\section{Theoretical Analysis of Fairness and Stability}
\label{sec:theoretical_analysis}

In this section, we provide a rigorous theoretical foundation for the proposed resource allocation mechanism to address the efficiency-fairness trade-off. Specifically, we formulate the intra-cell bandwidth allocation as an \textit{Egalitarian Bargaining Game} (EBG). By deriving the closed-form equilibrium, we provide a theoretical explanation for how the MMF objective promotes balanced bandwidth allocation and influences the resulting UAV topology. The impact of this spatial organization on energy efficiency is further examined through experimental evaluation.

\subsection{Egalitarian Bargaining for Bandwidth Allocation}
Let us analyze the resource contention within a specific UAV cell $u_n$ at time $t$. Given the trajectory $\mathbf{q}_n(t)$ and transmit power $P_n(t)$ determined by the MAPPO agent, the SNR $\gamma_{n,m}(t)$ for each associated user $e_m \in \mathcal{G}_n(t)$ is temporarily fixed. The remaining optimization dimension is the continuous bandwidth fraction $\omega_{n,m}(t)$.

The utility function of user $e_m$, representing its achievable data rate, is defined as $U_m(\omega_{n,m}) = \omega_{n,m} B \log_2(1 + \gamma_{n,m})$. In cooperative game theory, the standard \textit{Nash Bargaining Solution} (NBS) \cite{nash1950bargaining} maximizes the product of utilities, which mathematically corresponds to achieving PF. However, in mission-critical AGINs with severe spatial heterogeneity, PF tends to allocate excessive bandwidth to users with strong LoS channels, starving edge users and creating coverage holes. 

To strictly enforce the stringent universal coverage policy, we formulate the bandwidth allocation as an EBG \cite{kalai1977proportional}. The EBG seeks to maximize the minimum utility among all players, which strictly aligns with our MMF reward design. The egalitarian optimization problem is formulated as:
\begin{subequations}
\begin{align}
    \max_{\boldsymbol{\omega}} \quad & \min_{e_m \in \mathcal{G}_n(t)} U_m(\omega_{n,m}) \\
    \text{s.t.} \quad & \sum_{e_m \in \mathcal{G}_n(t)} \omega_{n,m} \leq 1, \\
    & \omega_{n,m} \geq 0, \quad \forall e_m.
\end{align}
\end{subequations}

\begin{proposition}\label{prop:ebg_solution}
The unique Egalitarian Bargaining Solution that achieves strict MMF dictates that the optimal bandwidth fraction $\omega_{n,m}^*$ is inversely proportional to the spectral efficiency of the user channel. The closed-form solution is given by:
\begin{equation}\label{eq:ebg_optimal_omega}
    \omega_{n,m}^* = \frac{\left[ \log_2(1 + \gamma_{n,m}) \right]^{-1}}{\sum_{e_j \in \mathcal{G}_n(t)} \left[ \log_2(1 + \gamma_{n,j}) \right]^{-1}}.
\end{equation}
\end{proposition}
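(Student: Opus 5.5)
To prove Proposition~\ref{prop:ebg_solution}, I would treat the egalitarian problem as a linear max-min program in $\boldsymbol{\omega}$. Write $c_m \triangleq B\log_2(1+\gamma_{n,m})$ for each $e_m\in\mathcal{G}_n(t)$. Since $\gamma_{n,m}>0$ for every associated user (finite path loss, $P_n(t)\ge P_{\min}>0$), each $c_m$ is strictly positive. Then $U_m(\omega_{n,m})=c_m\omega_{n,m}$. Introducing the epigraph variable $z$, the problem becomes
\[
\max_{\boldsymbol{\omega},\,z}\ z \quad\text{s.t.}\quad c_m\omega_{n,m}\ge z\ \ \forall e_m,\qquad \textstyle\sum_{m}\omega_{n,m}\le 1,\qquad \omega_{n,m}\ge 0 .
\]
This is a linear program with a compact feasible set, so an optimum exists. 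The plan has three parts: show the constraints are tight at the optimum, solve the resulting linear equalities, and argue uniqueness.

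\textbf{Tightness.} I would argue by contradiction in two steps. First, suppose at an optimum some user satisfies $c_k\omega_{n,k}>z^*$. Shift a small amount $\varepsilon$ of bandwidth from $e_k$ and spread it equally over the users that attain the minimum. This strictly raises every minimal utility, and for small $\varepsilon$ user $e_k$ stays above them. The minimum therefore rises strictly (or, if several users are minimal, one repeats the transfer finitely many times), contradicting optimality. Hence $c_m\omega^*_{n,m}=z^*$ for all $e_m$. Second, suppose $\sum_m\omega^*_{n,m}<1$. Scaling all $\omega^*_{n,m}$ up by the same factor raises $z$ strictly, again a contradiction. So the budget constraint binds.

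\textbf{Solving.} Equal utilities give $\omega^*_{n,m}=z^*/c_m$. Substituting into $\sum_m\omega^*_{n,m}=1$ yields
\[
z^*=\Bigl(\textstyle\sum_j c_j^{-1}\Bigr)^{-1}.
\]
Hence $\omega^*_{n,m}=c_m^{-1}/\sum_j c_j^{-1}$. The constant $B$ cancels, which gives exactly \eqref{eq:ebg_optimal_omega} and the claimed inverse proportionality. As a cross-check, I would verify the KKT conditions of the LP: take multipliers $\lambda_m=c_m^{-1}/\sum_j c_j^{-1}$ on the utility constraints and $\mu=z^*$ on the budget constraint. These are nonnegative and satisfy stationarity and complementary slackness, which certifies global optimality because the problem is linear.

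\textbf{Uniqueness.} The tightness step shows that every optimal solution has all utilities equal and the budget exhausted. Those $|\mathcal{G}_n(t)|$ equalities plus one budget equation determine $\boldsymbol{\omega}$ uniquely, because each $c_m>0$.

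The main obstacle is making the exchange argument in the tightness step rigorous when several users are simultaneously at the minimum. I would handle it cleanly with a direct bound instead: for any feasible $\boldsymbol{\omega}$,
\[
\min_m c_m\omega_{n,m}\cdot\textstyle\sum_m c_m^{-1}\ \le\ \sum_m \omega_{n,m}\ \le\ 1,
\]
with equality if and only if all $c_m\omega_{n,m}$ are equal and the budget binds. This one line gives optimality and uniqueness together, and it is the route I would present.
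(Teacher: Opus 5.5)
Your proposal is correct, and it is more complete than the paper's argument. The paper's proof is only your ``Solving'' step. It asserts that, because the feasible set is convex, the egalitarian optimum must give every user the same utility $C$. It then imposes $\sum_m \omega_{n,m}^*=1$ and solves for $C$. It never justifies why equal utilities and an exhausted budget are necessary at the optimum. Nor does it justify uniqueness, even though the statement claims it.

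Your bound $\min_m c_m\omega_{n,m}\sum_m c_m^{-1}\le\sum_m\omega_{n,m}\le 1$ supplies that missing certificate in one line:
\begin{itemize}
\item it identifies the optimal value $(\sum_j c_j^{-1})^{-1}$;
\item it shows this value is attained;
\item it shows the equal-utility, full-budget allocation is the only point that attains it.
\end{itemize}
It also makes explicit what the appeal to convexity glosses over. Equal utilities are forced by $c_m>0$ and by bandwidth being freely transferable between users, not by convexity alone. A convex problem with a capped concave utility, for instance, need not equalize at its max-min optimum.

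If you keep the longer LP version, two small fixes are needed:
\begin{itemize}
\item The feasible set in $(\boldsymbol{\omega},z)$ is not compact, since $z$ is unbounded below. Existence follows instead from continuity of $\min_m c_m\omega_{n,m}$ on the compact simplex.
\item The rescaling step needs $z^*>0$. This holds because the equal split already gives a positive minimum.
\end{itemize}
Your parenthetical about repeating the transfer is also unnecessary. Spreading $\varepsilon$ over all minimal users raises all of them at once. None of these points arises if you present the direct bound, as you propose.
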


\begin{proof}
To maximize the minimum utility under a convex feasible set, the EBG equilibrium requires that all users achieve the exact same utility level (i.e., absolute rate fairness). Thus, we have $U_m(\omega_{n,m}^*) = U_j(\omega_{n,j}^*)$ for all $m, j$. Substituting the utility function yields:
\begin{equation}
    \omega_{n,m}^* \log_2(1 + \gamma_{n,m}) = C, \quad \forall e_m \in \mathcal{G}_n(t),
\end{equation}
where $C$ is a constant. By applying the bandwidth constraint $\sum \omega_{n,m}^* = 1$, we can solve for $C$ and consequently obtain the exact fractional allocation in~\eqref{eq:ebg_optimal_omega}.
\end{proof}

\begin{remark}[Pareto-Stable Spatial Dispersion]
Proposition~\ref{prop:ebg_solution} provides a theoretical explanation for the incentive structure underlying the proposed ORCHID framework. At the egalitarian equilibrium, users experiencing poorer channel conditions require a larger bandwidth allocation to achieve the same utility level as users with stronger channels. Consequently, maintaining fairness becomes increasingly costly when severe path-loss disparities exist within a UAV cell. Under the proposed composite reward, this incentive encourages the learned MAPPO policy to reduce channel disparities by adopting a more spatially balanced UAV deployment, thereby alleviating the need for excessive bandwidth compensation for disadvantaged users. This interpretation is consistent with the spatial dispersion observed in our experiments, where the learned policy intentionally sacrifices opportunistic throughput peaks in favor of more equitable service provisioning and improved long-term system performance.
\end{remark}

\subsection{Stable Matching Equilibrium for User Association}
Building upon the optimal bandwidth allocation, the dynamic association between the UAV fleet $\mathcal{U}$ and GUs $\mathcal{G}$ can be modeled as a many-to-one \textit{Matching Game} with dynamic quotas.

\begin{definition}[Stable Association]
A many-to-one matching between GUs and UAVs is Pareto-stable if no user and UAV pair can deviate from their current association to strictly improve their individual utilities without degrading the utility of any other user in the network.
\end{definition}

\begin{thm}[Orchestration Equilibrium]
By maximizing the composite reward integrated with the global JFI penalty, the multi-agent orchestration policy is continuously driven toward the Pareto-stable matching equilibrium.
\end{thm}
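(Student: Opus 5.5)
The plan is to argue by contradiction: any association that is not Pareto-stable admits a blocking deviation, and executing that deviation strictly increases the composite reward, so a reward-maximizing policy cannot settle on it. First, fix the trajectory $\mathbf{Q}$ and powers $\mathbf{P}$ chosen by the actors, so that every SNR $\gamma_{n,m}$ is fixed. Inside each cell, apply Proposition~\ref{prop:ebg_solution}, which makes every user in $\mathcal{G}_n(t)$ receive the common rate
\begin{equation*}
C_n = \frac{B}{\sum_{e_j\in\mathcal{G}_n(t)}[\log_2(1+\gamma_{n,j})]^{-1}}.
\end{equation*}
The association game then reduces to a many-to-one matching in which user $e_m$ values UAV $u_n$ by the egalitarian rate $C_n$ it would obtain after joining. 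This cell-level utility is strictly decreasing in the load of the cell, which is the congestion structure of~\eqref{eq:user_rate}. I will treat the reward as a potential-type function of $(\mathbf{Q},\mathbf{P},\boldsymbol{\alpha})$ and identify the reward terms $r^{\text{cov}}$, $r^{\text{load}}$ and $r^{\text{rate}}$ as the components that respond monotonically to association changes.

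The central step is a \emph{monotonicity lemma}. Suppose a pair $(e_m,u_k)$ blocks the current matching: $e_m$ would strictly gain by moving from $u_n$ to $u_k$, and no other user's rate would drop. Then the following must be shown.
\begin{itemize}
\item The coverage term is unchanged, or strictly increases if $e_m$ was previously unserved.
\item The egalitarian rates of both affected cells weakly increase and at least one strictly increases. From these facts, show that the JFI of the served rate vector, \eqref{eq:rate_fairness_reward}, does not decrease. One way is to use that raising the minimum rates while holding the others fixed moves the vector toward equality. Another is to bound the change of $(\sum R)^2/(|\mathcal{G}_{\text{served}}|\sum R^2)$ directly.
\item The load JFI~\eqref{eq:load_fairness_reward} weakly increases whenever the move goes from a heavier to a lighter cell. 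This is the typical case, because $C_n$ decreases with load.
\end{itemize}
Combining these, a Pareto-improving deviation weakly increases every fairness-related reward term and strictly increases their weighted sum. Consequently every maximizer of $U_{\text{total}}$ over $\boldsymbol{\alpha}$ at fixed $(\mathbf{Q},\mathbf{P})$ is Pareto-stable. Since blocking moves strictly increase a bounded reward over a finite set of associations, any sequence of such moves terminates, so a stable matching exists.

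To connect this to learning, I will invoke the policy-improvement property of the clipped MAPPO objective. Under standard step-size conditions, which the R\&F learning-rate decay supports by shrinking update variance by $\kappa^2$, the expected return is non-decreasing. Its limit points are therefore local maximizers of $J(\pi_\theta)$, and in particular these policies induce no profitable blocking deviation. I would phrase "continuously driven toward" as: the set of Pareto-stable associations contains the limiting support of the induced association process.

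I expect the main obstacle to be the JFI step. Jain's index is not monotone under arbitrary Pareto improvements, since raising an already-high rate can lower it. I would first try to restrict attention to the deviations that can actually occur under the egalitarian allocation, where the gaining users are the ones in the lighter or better-channel cell. If that does not cover every case, I would fall back on showing that the strict gain in the EE term (larger aggregate rate at fixed power) dominates any small JFI loss. That requires an assumption on the weights $w_2,w_4$, which I would state explicitly. A second caveat is that the Max-RSSI association rule does not directly optimize $\boldsymbol{\alpha}$. The argument therefore has to go through the positions $\mathbf{Q}$, which indirectly select the association, and this is where the result can only be stated as a tendency rather than a guarantee.
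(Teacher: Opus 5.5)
\textbf{Verdict: genuine gap.} Your monotonicity lemma is the only bridge from reward maximization to stability, it is false, and neither fallback repairs it.

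The fact you missed is what the paper's definition becomes under Proposition~\ref{prop:ebg_solution} (or the equal-share rule \eqref{eq:user_rate}). Adding a user to a \emph{nonempty} cell $u_k$ adds a positive term to the denominator of $C_k$, so every incumbent of $u_k$ loses rate. The clause ``without degrading any other user'' therefore rules out every move into a loaded UAV. A blocking pair $(e_m,u_k)$ exists only when $u_k$ is idle, and any association in which all $N$ UAVs serve someone is automatically Pareto-stable. Blocking moves are thus exactly the jumps into idle UAVs that benefit the mover. There the mover suddenly owns a whole sub-band and typically overshoots everyone, so the improvement does not just lift the minimum, and Jain's index can fall sharply.

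Concretely, let $N=2$, let ten users with equal spectral efficiency $s$ share $u_1$, let $u_2$ be idle, and let one of them see the same $s$ toward $u_2$. Moving that user is a strict Pareto improvement: nine users go from $Bs/10$ to $Bs/9$ and the mover goes to $Bs$. Coverage and penalties are unchanged. $r^{\text{load}}$ rises from $0.5$ to $100/164\approx0.61$, but $r^{\text{rate}}$ falls from $1$ to $0.36$. Since $r^{\text{ee}}$ is min--max normalized, its weighted gain is at most $w_2$. At the paper's initial weights ($w_2=0.2$, $w_3=w_4=0.8$) the composite reward therefore drops by at least about $0.22$.

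So a maximizer of $U_{\text{total}}$ over $\boldsymbol{\alpha}$ at fixed $(\mathbf{Q},\mathbf{P})$ need not be Pareto-stable, and your finite-improvement argument falls with it. Restricting to deviations ``that can actually occur'' leaves exactly these moves. The weighted JFI loss is not small; it can reach $w_4(1-1/|\mathcal{G}_{\text{served}}|)$. Letting EE dominate would require a weight condition that the paper's schedule violates in this example for every $\tau<0.48$, which would be a different theorem.

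The paper does not attempt such a lemma. Its proof is a heuristic that gives the coordinating role to $r^{\text{load}}$: greedy congestion shrinks $\omega^*_{n,m}$ via Proposition~\ref{prop:ebg_solution} and skews the load vector, which the load term penalizes, while trajectory and power adaptation reshape preferences. It concludes only that beneficial unilateral deviations are discouraged.

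The characterization above lets you make that precise along the paper's line rather than yours. A blocking move takes one of two forms:
\begin{itemize}
\item It moves a served user from a cell of load $K_n\ge1$ into an idle UAV. This keeps $\sum_n K_n$ fixed and changes $\sum_n K_n^2$ by $2-2K_n\le0$, so $r^{\text{load}}$ weakly rises.
\item It brings an unserved user into an idle UAV. This raises $r^{\text{cov}}$ and, because $\sum_n K_n^2\ge\sum_n K_n$, cannot lower $r^{\text{load}}$.
\end{itemize}
Moreover, every unstable association has an idle UAV, hence $r^{\text{load}}\le(N-1)/N$ by Cauchy--Schwarz. That is the defensible meaning of ``driven toward'': the load term penalizes every instability. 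Since $r^{\text{rate}}$ and $r^{\text{ee}}$ may pull the other way, it cannot be upgraded to ``reward maximizers are stable.''

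Your learning step also does not hold as stated:
\begin{itemize}
\item Clipped PPO gives no monotone-improvement guarantee with sampled advantages and co-adapting agents.
\item The $\kappa^2$ variance heuristic is not a step-size condition.
\item Local optimality of $J(\pi_\theta)$ in $\theta$ says nothing about discrete changes of $\boldsymbol{\alpha}$, which Max-RSSI fixes from $(\mathbf{Q},\mathbf{P})$ anyway.
\end{itemize}
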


\begin{proof}
In the ORCHID framework, the global JFI reward on UAV load, $r^{\text{load}}(t)$, serves as a centralized coordination mechanism. If users were to greedily associate with the nearest UAV causing severe congestion, the local SNR might increase, but the fractional bandwidth $\omega_{n,m}^*$ allocated to each user would drastically reduce according to Proposition~\ref{prop:ebg_solution}. Furthermore, this greedy deviation skews the global load distribution, incurring a severe penalty on the composite reward. The UAVs continuously adjust their trajectories and transmit powers to maximize the cumulative reward. This optimization process implicitly reshapes the preference profiles of the matching game. Consequently, it discourages beneficial unilateral deviations and promotes the egalitarian fairness objective.
\end{proof}

\subsection{Stability Analysis of the Orchestration Dynamics}
\label{subsec:stability_analysis}

While the preceding theoretical analysis establishes the egalitarian equilibrium for resource allocation, achieving this equilibrium in a highly dynamic and non-convex multi-agent environment remains an optimization challenge. Standard MARL algorithms often suffer from premature convergence to suboptimal greedy equilibria due to vanishing gradients and the non-stationarity of the environment. The proposed R\&F mechanism provides a theoretically motivated mechanism for improving optimization stability.

Let $\boldsymbol{\theta}$ denote the parameters of the actor network. During Phase I training, the policy is updated using the Adam optimizer, which maintains the exponential moving averages of the gradient $\mathbf{m}_t$ (first moment) and the squared gradient $\mathbf{v}_t$ (second moment). The parameter update rule is given by:
\begin{equation}
    \boldsymbol{\theta}_{t+1} = \boldsymbol{\theta}_t - \frac{\eta}{\sqrt{\hat{\mathbf{v}}_t} + \epsilon} \hat{\mathbf{m}}_t,
\end{equation}
where $\eta$ is the learning rate, and $\hat{\mathbf{m}}_t$ and $\hat{\mathbf{v}}_t$ are the bias-corrected moments. 

As the UAV swarm gradually forms a stable topology, the composite reward approaches a local plateau. Consequently, the gradient $\nabla \boldsymbol{\theta}$ becomes vanishingly small. In this region, the accumulated second moment $\mathbf{v}_t$ severely decays the effective learning rate, causing the agents to be trapped in a saddle point or a local optimum. If this local optimum corresponds to a greedy configuration, the system fails to achieve the egalitarian equilibrium derived in Proposition~\ref{prop:ebg_solution}.

The R\&F mechanism intervenes precisely when the multi-metric plateau condition is met as defined in Section~\ref{subsec:reset_finetune}. By resetting the optimizer states ($\mathbf{m} \leftarrow \mathbf{0}, \mathbf{v} \leftarrow \mathbf{0}$) while preserving the trained weights $\boldsymbol{\theta}$, the mechanism effectively clears the accumulated optimization inertia. This intervention effectively restores the effective learning rate to its initial magnitude while largely preserving the spatial feature representations already encoded in $\boldsymbol{\theta}$.

\begin{remark}[Optimization Dynamics of R\&F]
Mathematically, this reset acts as a controlled stochastic perturbation in the policy space. It enables the gradient descent process to escape the current suboptimal basin of attraction and resume exploration along steeper gradients. Guided by the heavy penalty on fairness violation in the composite reward, the orchestration dynamics reduce the risk of prolonged stagnation and are guided toward the egalitarian stable matching equilibrium.
\end{remark}

\color{black}
\section{Performance Evaluation}
\label{sec:evaluation}

This section validates the effectiveness of the proposed ORCHID framework through comprehensive simulations. We detail the simulation environment, comparative baselines, and the performance metrics used to quantify the system's balance between coverage, throughput, and fairness.

\subsection{Simulation Setup}
\label{subsec:setup}
We simulate a large-scale mission area of dimensions $D \times D = 2 \times 2$ km$^2$ containing a central GBS, implemented using Python and PyTorch. To comprehensively evaluate the orchestration resilience under varying degrees of disaster severity, we model two distinct spatial user distributions:
\begin{itemize}
    \item \textbf{TCP:} As described in Section~\ref{sec:system_model}, this models generalized post-disaster scenarios where victims gather around specific functional hotspots (e.g., shelters or evacuation centers), forming $K_p=5$ distinct parent clusters with a user scattering standard deviation of $\sigma_{\rm scatter}=150$ m.
    \item \textbf{Uniform Random Distribution:} To address extreme mission-critical scenarios where underlying infrastructure is completely obliterated (e.g., massive earthquakes), victims are typically scattered randomly without any clustering pattern. In this scenario, users are distributed uniformly across the entire $4 \text{ km}^2$ area, severely challenging the network's edge coverage capability.
\end{itemize}

The aerial fleet consists of $N=6$ rotary-wing UAVs operating within a 3D space. By expanding the service area to $4 \text{ km}^2$, the network fundamentally prevents trivial 100\% coverage solutions, comprehensively evaluating the proposed framework's capacity to maintain the ``no-user-left-behind'' principle in dispersed topologies.

For the MARL architecture, both Actor and Critic networks utilize \textit{Multi-Layer Perceptrons} (MLPs) with three hidden layers (256 neurons each, ReLU activation), optimized via Adam with a batch size of 128. To ensure convergence, the R\&F mechanism monitors the performance trend over a sliding window to trigger stability phase transitions. 
Furthermore, to prevent policy overfitting and ensure robust generalization, the MARL environment adopts a randomized training curriculum. At the onset of each training episode, the spatial distribution of the ground users is stochastically regenerated, and the Phase I initialization is dynamically re-executed to anchor the UAVs to the new topological centroids. This encourages the agents to learn generalized coordination policies rather than memorizing fixed spatial coordinates. 
Key simulation parameters are summarized in Table~\ref{tab:sim_params}.

As specified in Table~\ref{tab:sim_params}, we employ asymmetric learning rates ($\eta_{\text{critic}} = 10^{-3}$, $\eta_{\text{actor}} = 10^{-4}$) to enhance stability. A higher $\eta_{\text{critic}}$ allows the Critic to rapidly adapt to dynamic interference and provide timely feedback, while a conservative $\eta_{\text{actor}}$ ensures smooth trajectory evolution and prevents drastic policy oscillations.

In our experiments, we contrast two distinct orchestration objectives to investigate the fundamental trade-off between service equity and system efficiency:
\begin{itemize}
    \item \textbf{MMF:} The default configuration for ORCHID. It prioritizes the ``no-user-left-behind'' policy by maximizing the JFI in the reward function, encouraging UAVs to provide improved service for worst-case users.
    \item \textbf{PF:} A comparative configuration where the agents aim to maximize the sum of logarithmic data rates ($\sum \log R_m$). This typically yields higher aggregate throughput but may tolerate lower rates for edge users.
\end{itemize}
Unless otherwise specified, the simulations employ the MMF configuration.

Furthermore, we evaluate ORCHID against an ablation variant and three representative baselines. The selection of these comparative methods is deliberately designed to isolate our specific structural contributions and address the unique challenges of real-time mission-critical networks. Specifically, traditional iterative optimization methods (such as successive convex approximation) are excluded from dynamic comparisons due to their prohibitive computational latency; instead, we utilize the static K-Means approach as the representative heuristic. For the learning algorithms, we compare the two dominant branches of continuous-action CTDE paradigms: the off-policy deterministic approach (MADDPG) and the on-policy stochastic approach (MAPPO, represented by our ablation variant). The detailed configurations are as follows:

\begin{itemize}
    \item \textbf{Ablation (w/o R\&F):} To validate the performance gain provided by the R\&F mechanism, we include an ablated variant. This baseline utilizes the identical GBS-aware K-Means initialization, MMF bandwidth allocation, and MAPPO algorithm as the proposed framework, but the optimizer intervention is entirely disabled.
    \item \textbf{Baseline 1 (MADDPG + PF):} This baseline employs the MADDPG algorithm~\cite{lowe2017multi} coupled with PF bandwidth allocation. It shares the same K-Means initialization to ensure an equitable assessment of the underlying continuous-control architecture and resource allocation strategy under dynamic mobility.
    \item \textbf{Baseline 2 (MADDPG + MMF):} A variant of Baseline 1 that utilizes the MMF strategy instead of PF. Comparing this baseline against the proposed ORCHID explicitly isolates the impact of the core reinforcement learning algorithm (MAPPO versus MADDPG) under identical strict fairness constraints.
    \item \textbf{Baseline 3 (Static K-Means + PF):} This heuristic method represents conventional static network orchestration. It executes the Phase I initialization of our framework (including GBS filtering) to deploy the UAV swarm. However, the aerial nodes hover at these fixed geographic coordinates for the entire mission duration, and the bandwidth is distributed relying solely on the PF approach.
\end{itemize}

\begin{table}[t]
\centering
\caption{Simulation Parameters}
\label{tab:sim_params}

\setlength{\tabcolsep}{4pt} 

\resizebox{\columnwidth}{!}{%
\begin{tabular}{|l|l|c|}
\hline
\textbf{Category} & \textbf{Parameter} & \textbf{Value} \\ \hline\hline
\multirow{5}{*}{Network} & Service Area Size ($D \times D$) & $2 \times 2$ km$^2$ \\
 & GBS Position ($\mathbf{q}_0$) & $[1000, 1000, 30]$ m \\ 
 & Number of GUs ($M$) & 100 \\
 & User Scattering Std. Dev. ($\sigma_{\rm scatter}$) & 150 m \\
 & Time Slot Duration ($\delta_t$) & 1 s \\\hline
\multirow{7}{*}{UAV} & Number of UAVs ($N$) & 6 \\
 & Initial Cruising Altitude ($H_{\text{init}}$) & 100 m \\
 & Altitude Range $[H_{\min}, H_{\max}]$ & $[80, 120]$ m \\
 & Max Horizontal Speed ($V_{\max}$) & 15 m/s \\
 & Min. Safety Clearance ($d_{\min}$) & 5 m \\
 & Transmit Power Range $[P_{\min}, P_{\max}]$ & $[100, 200]$ mW \\ 
 & Power Adjustment Step Size ($\delta_p$) & 10 mW \\\hline
\multirow{8}{*}{Aerodynamics} & Blade profile power ($P_0$) & 79.856 W \\
 & Induced power ($P_i$) & 314.15 W \\
 & Tip speed of the rotor blade ($U_{\text{tip}}$) & 120.0 m/s \\
 & Mean rotor induced velocity ($v_0$) & 4.03 m/s \\
 & Fuselage drag ratio ($d_{\text{fuse}}$) & 0.6 \\
 & Air density ($\rho$) & 1.225 kg/m$^3$ \\
 & Rotor solidity ($s$) & 0.05 \\
 & Rotor disc area ($A$) & 0.503 m$^2$ \\ \hline
\multirow{9}{*}{Channel} & Carrier Frequency ($f_c$) & 2.4 GHz \\
 & Access Sub-band Bandwidth ($B$) & 10 MHz \\
 & Backhaul Bandwidth ($B_{\text{BH}}$) & 20 MHz \\
 & GBS Transmit Power ($P_{\text{GBS}}$) & 30 dBm \\
 & Noise Power Density ($N_0$) & -174 dBm/Hz \\ 
 & Min. Access Link SINR ($\Gamma_{\text{req}}$) & 6 dB \\ 
 & Min. Backhaul Link SNR ($\gamma_{\text{th}}$) & 10 dB \\ 
 & S-curve parameters ($a, b$) & 9.61, 0.16 \\ 
 & Excessive path losses ($\eta_{\text{LoS}}$, $\eta_{\text{NLoS}}$) & 1.0 dB, 20.0 dB \\ \hline
\multirow{13}{*}{Learning} & Discount Factor ($\gamma_{\mathrm{df}}$) & 0.99 \\
 & Actor Learning Rate ($\eta_{\text{actor}}$) & $10^{-4}$ \\
 & Critic Learning Rate ($\eta_{\text{critic}}$) & $10^{-3}$ \\
 & Decay Factor ($\kappa$) & 0.5 \\
 & Batch Size ($N_{\text{batch}}$) & 128 \\
 & PPO Epochs & 4 \\
 & PPO Clip Ratio ($\epsilon_{\text{clip}}$) & 0.2 \\
 & Stability Threshold ($\epsilon_{\text{tol}}$) & 0.05 \\
 & GAE Parameter ($\lambda_{\text{GAE}}$) & 0.95 \\
 & Entropy Coefficient ($c_e$) & Dynamic ($0.01 \to 0$) \\
 & R\&F Window Size ($W$) & 50 \\ 
 & Total Steps per Training Episode ($T$) & 200 \\
 & Total Training Episodes ($E_{\max}$) & 700 \\ \hline
\multirow{8}{*}{Reward} 
 & Coverage Bonus Weight ($w_1$) & 0.5 \\
 & EE Objective Weight ($w_2$) & Dynamic ($0.2 \to 0.5$) \\
 & Load JFI Objective Weight ($w_3$) & Dynamic ($0.8 \to 0.5$) \\
 & Rate JFI Objective Weight ($w_4$) & Dynamic ($0.8 \to 0.5$) \\
 & Penalty Scale Weight ($w_5$) & Dynamic ($0.1 \to 0.3$) \\
 & Collision Penalty Coefficient ($\lambda_c$) & 2.5 \\
 & Boundary Penalty Coefficient ($\lambda_b$) & 1.0 \\
 & Backhaul Penalty Coefficient ($\lambda_{bh}$) & Dynamic (Max 45.0) \\ \hline
\end{tabular}%
}
\end{table}


\subsection{Evaluation Metrics}
\label{subsec:metrics}

The system performance is evaluated based on the following three key metrics, averaged over the evaluation episodes:

\subsubsection{User Coverage Rate}
This metric quantifies the service availability of the network. A user is considered ``covered'' if their achievable SINR exceeds a minimum service threshold $\Gamma_{\text{req}}$. The coverage rate is defined as:
\begin{equation}
    \text{Coverage} = \frac{1}{T} \sum_{t=1}^T \frac{\sum_{u_n \in \mathcal{U}} |\mathcal{G}_n(t)|}{M_{\text{UAV}}} \times 100\%,
\end{equation}
where $M_{\text{UAV}}$ is the total number of target edge users, and $\mathcal{G}_n(t)$ denotes the set of users served by UAV $u_n$ satisfying the SINR requirement.

\subsubsection{System Energy Efficiency}
Given the limited energy budget of UAVs, energy efficiency serves as a paramount performance metric. We evaluate the holistic system efficiency, defined as the ratio of aggregate throughput to the total power consumption (including both transmission and propulsion). The instantaneous system energy efficiency $\text{EE}_{\text{sys}}(t)$ (bits/Joule) is:
\begin{equation}
    \text{EE}_{\text{sys}}(t) = \frac{\sum_{u_n \in \mathcal{U}} \sum_{e_m \in \mathcal{G}_n(t)} R_m(t)}{\sum_{u_n \in \mathcal{U}} P_{\text{total},n}(t)}.
\end{equation}


\subsubsection{Fairness and Load Balancing Metrics}
To evaluate the equity of resource allocation over the entire mission duration, we adopt the Time-Averaged JFI, extending the instantaneous definitions provided in Section~\ref{subsec:phase2}.

\paragraph{UAV Load Fairness ($\overline{\text{JFI}}_{\text{load}}$)}
To assess the workload distribution among the fleet, we compute the time-averaged UAV Load Fairness based on the instantaneous definition in~\eqref{eq:load_fairness_reward}, denoted as $r^{\text{load}}(t)$:
\begin{equation}
    \overline{\text{JFI}}_{\text{load}} = \frac{1}{T} \sum_{t=1}^T r^{\text{load}}(t).
\end{equation}
Ideally, $\overline{\text{JFI}}_{\text{load}} \approx 1$ indicates that the workload remains evenly distributed among the UAV fleet during the entire mission, preventing scenarios where some UAVs are overloaded while others are idle.

\paragraph{User Rate Fairness ($\overline{\text{JFI}}_{\text{rate}}$)}
Complementing load balance, we evaluate the service consistency experienced by the GUs. Based on the instantaneous User Rate Fairness $r^{\text{rate}}(t)$ defined in~\eqref{eq:rate_fairness_reward}, the time-averaged metric is:
\begin{equation}
    \overline{\text{JFI}}_{\text{rate}} = \frac{1}{T} \sum_{t=1}^T r^{\text{rate}}(t).
\end{equation}
A higher $\overline{\text{JFI}}_{\text{rate}}$ implies that the network maintains uniform service quality for all users throughout the episode, effectively implementing the ``no-user-left-behind'' policy.

\subsection{Convergence and Stability Analysis}
\label{sec:convergence}
To evaluate the convergence properties and stability of the proposed architecture, Fig.~\ref{fig:convergence} presents the learning curves across four key metrics over the training episodes. Observing the system total reward in Fig.~\ref{fig:convergence:total_reward} and the fairness indices in Figs.~\ref{fig:convergence:rate_jfi} and \ref{fig:convergence:load_jfi}, Baseline 3 utilizing static deployment achieves the highest values. This phenomenon is largely attributed to the low mobility and quasi-static nature of the GUs in our scenario. Since the user distribution changes relatively slowly, the initial spatial topology determined by the GBS-aware K-Means algorithm remains highly effective throughout the mission duration. Consequently, maintaining fixed optimal spatial centroids provides a highly stable baseline for maximizing aggregate rewards and service equity, avoiding the continuous exploration penalty inherent to dynamic reinforcement learning agents.

\begin{figure}[!t] 
    \centering
    \subfigure[Convergence of Total Reward.]{
        \includegraphics[width=0.485\linewidth]{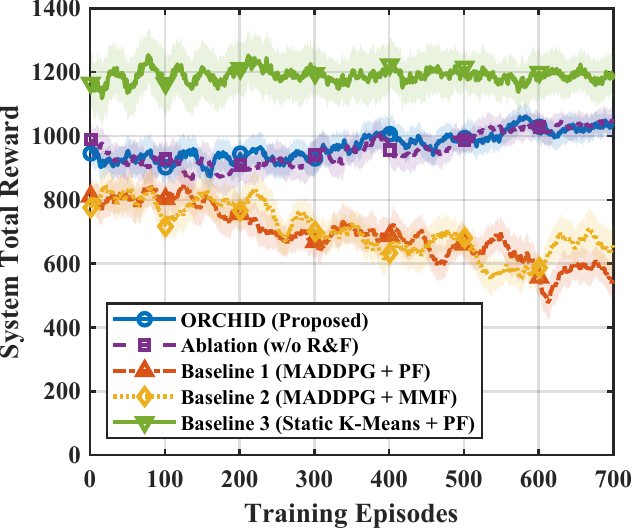}
        \label{fig:convergence:total_reward}
    }%
    \subfigure[Convergence of System EE.]{
        \includegraphics[width=0.475\linewidth]{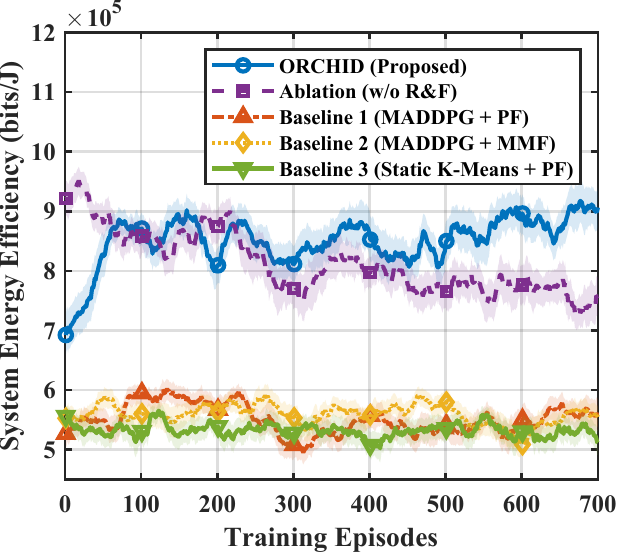}
        \label{fig:convergence:nee}
    }\\    
    \subfigure[Convergence of User Rate JFI.]{
        \includegraphics[width=0.48\linewidth]{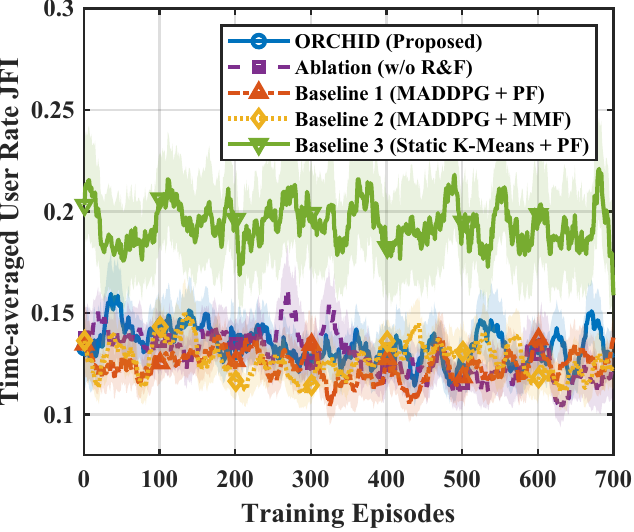}
        \label{fig:convergence:rate_jfi}
    }%
    \subfigure[Convergence of UAV Load JFI.]{
        \includegraphics[width=0.48\linewidth]{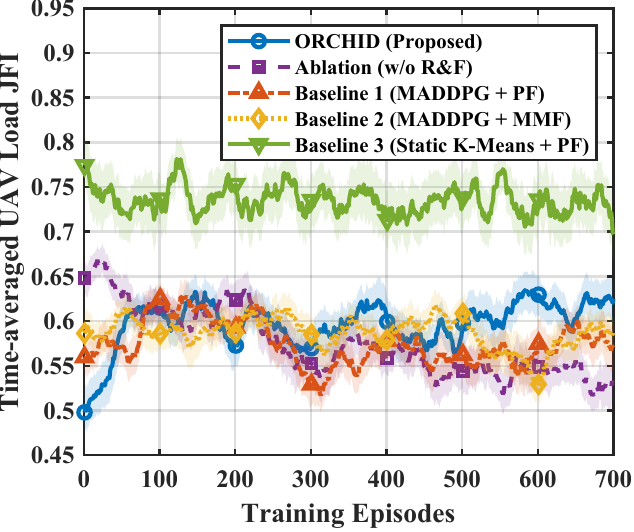}
        \label{fig:convergence:load_jfi}
    }
    \caption{Convergence and stability analysis over 700 training episodes. The results compare the proposed ORCHID framework against various baselines, illustrating the critical trade-off between static topological fairness and dynamic energy efficiency across four key metrics: \subref{fig:convergence:total_reward} system total reward, \subref{fig:convergence:nee} system energy efficiency, \subref{fig:convergence:rate_jfi} time-averaged user rate JFI, and \subref{fig:convergence:load_jfi} time-averaged UAV load JFI.}
    \label{fig:convergence}
\end{figure}

However, this static optimality comes at a severe energy cost, as the UAVs must continuously expend hovering power without dynamically optimizing their proximity to active user clusters. As demonstrated in Fig.~\ref{fig:convergence:nee}, the proposed ORCHID framework successfully overcomes this critical limitation, achieving a highly competitive and stable system EE compared to the static approach. By dynamically orchestrating the aerial swarm, the MAPPO agents effectively learn to balance the trade-off between spatial coverage and total power consumption. Furthermore, compared to the MADDPG baselines, which suffer from severe optimization oscillations in the highly non-convex environment, the MAPPO-based approaches demonstrate vastly superior stability in the early learning stages.

Crucially, when comparing ORCHID against its ablation variant (w/o R\&F), the algorithmic necessity of the optimizer intervention becomes strikingly evident. As observed in the mid-to-late stage training (e.g., after $200$ episodes) in Fig.~\ref{fig:convergence:nee} and Fig.~\ref{fig:convergence:load_jfi}, the ablation variant without R\&F begins to exhibit severe primacy bias and experiences significant policy degradation. Its system energy efficiency and load fairness drastically drop and plateau at suboptimal levels due to accumulated gradient momentum. In stark contrast, ORCHID successfully clears this obsolete momentum to break through the suboptimal plateau. Consequently, it achieves and maintains a consistently higher and more stable energy efficiency and egalitarian fairness throughout the mission, explicitly validating the core contribution of the R\&F mechanism in preventing late-stage policy collapse.

\subsection{Scalability Analysis}
\label{sec:scalability}
To verify the robustness of ORCHID under varying network densities, we conduct an inference-based scalability analysis without retraining the neural networks. Because the MAPPO actor relies strictly on local observations during decentralized execution, the pre-trained policy scales seamlessly to unseen environments with different swarm sizes and user populations, inherently avoiding any input dimension mismatches. During this evaluation, exploratory noise and parameter updates are entirely disabled, forcing the agents to orchestrate trajectories based exclusively on the learned policies. Finally, to ensure statistical reliability, all performance metrics are averaged over 100 independent testing episodes per configuration to capture both the mean and variance.

\begin{figure}[!t] 
    \centering
    \subfigure[Impact on system coverage ratio.]{
        \includegraphics[width=0.49\linewidth]{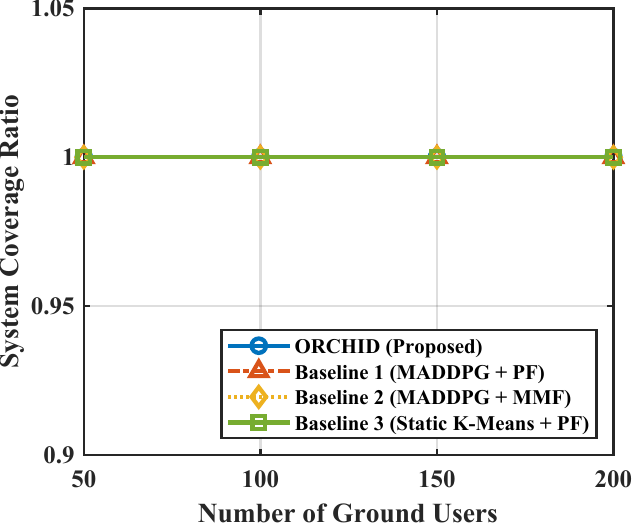}
        \label{fig:scalability_users:cov}
    }%
    \subfigure[Impact on system EE.]{
        \includegraphics[width=0.47\linewidth]{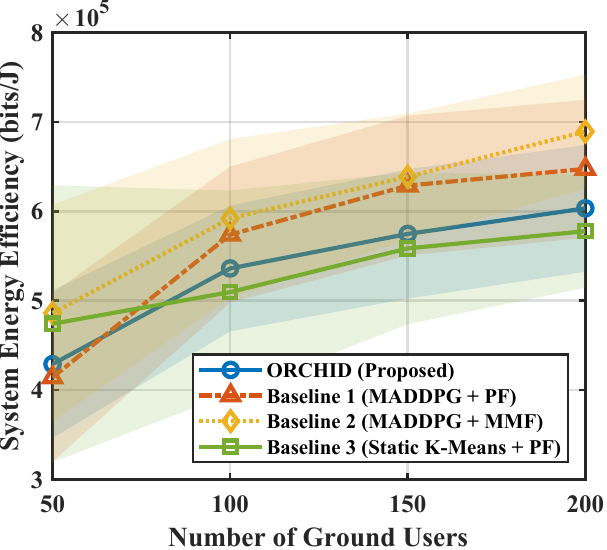}
        \label{fig:scalability_users:nee}
    }\\    
    \subfigure[Impact on User Rate JFI.]{
        \includegraphics[width=0.48\linewidth]{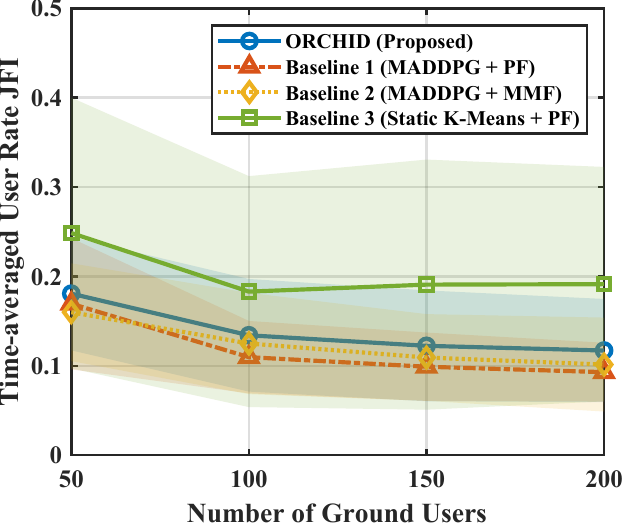}
        \label{fig:scalability_users:rate_jfi}
    }%
    \subfigure[Impact on UAV Load JFI.]{
        \includegraphics[width=0.48\linewidth]{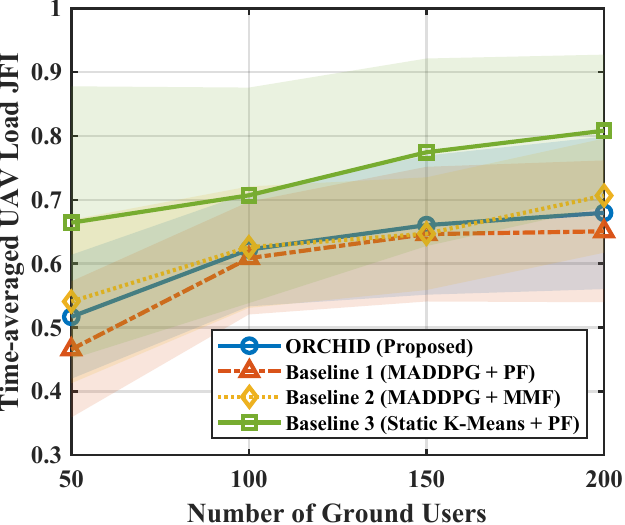}
        \label{fig:scalability_users:load_jfi}
    }
    \caption{Scalability analysis against the number of GUs. The curves indicate the mean performance over 100 inference episodes, while the shaded regions represent the standard deviation, illustrating the algorithm stability under dense topologies across \subref{fig:scalability_users:cov} system coverage ratio, \subref{fig:scalability_users:nee} system energy efficiency, \subref{fig:scalability_users:rate_jfi} user rate JFI, and \subref{fig:scalability_users:load_jfi} UAV load JFI.}
    \label{fig:scalability_users}
\end{figure}

\subsubsection{Impact of Number of Users}
Fig.~\ref{fig:scalability_users} evaluates system performance as the ground user count scales from 50 to 200. As shown in Fig.~\ref{fig:scalability_users:cov}, all algorithms maintain a near-perfect coverage ratio across varying network densities, demonstrating the stability of the proposed HetNet architecture. While the GBS provides macro-cell blanket coverage, the UAVs function as dynamic small-cells to offload traffic from dense user clusters. However, efficiency-fairness trade-offs intensify in denser environments.

Fig.~\ref{fig:scalability_users:nee} shows that system EE increases with user density because aggregate throughput grows while UAV hovering power remains constant. Notably, when comparing the underlying DRL baselines, the MMF-driven approach (Baseline 2) outperforms the PF-based strategy (Baseline 1) in mean EE. While Baseline 1 greedily prioritizes users with superior channel conditions, this opportunistic behavior induces severe UAV clustering and resource congestion, ultimately bottlenecking the total system throughput. Consequently, Figs.~\ref{fig:scalability_users:rate_jfi} and \ref{fig:scalability_users:load_jfi} show that Baseline 1 not only yields lower EE but also severely degrades service equity.

Conversely, Baseline 3 (static deployment) achieves high fairness by maintaining fixed optimal spatial centroids but suffers the lowest EE due to uncompensated path loss. The proposed ORCHID framework strikes an optimal Pareto balance. By enforcing the strict MMF constraint within the MAPPO policy, ORCHID deliberately trades a fraction of the highly unstable peak EE seen in the MADDPG baselines to sustain superior user rate fairness, effectively preventing cell-edge user starvation.

The standard deviation shaded regions further illustrate algorithmic robustness. For system EE, ORCHID exhibits the narrowest variance among the evaluated dynamic methods. Regarding fairness, Baseline 2 displays a slightly narrower JFI variance than ORCHID, but at the cost of a significantly lower mean fairness and a constrained performance ceiling. In contrast, ORCHID achieves higher mean fairness and a superior performance ceiling than Baseline 2, while maintaining controlled variance. This stability verifies that ORCHID ensures equitable service despite stochastic user distributions.

\subsubsection{Impact of Number of UAVs}
Fig.~\ref{fig:scalability_uavs} evaluates system performance as the aerial swarm size scales from 4 to 10. Expanding the swarm size introduces a fundamental trade-off: it increases backhaul capacity and the effective coverage footprint, but linearly escalates hovering power consumption and necessitates precise spatial coordination to avoid redundant resource allocation.

As shown in Fig.~\ref{fig:scalability_uavs:cov}, near-perfect coverage is achieved by all algorithms. However, system EE exhibits distinct algorithmic trends (Fig.~\ref{fig:scalability_uavs:nee}). Contrary to the conventional assumption that PF maximizes throughput, Baseline 1 yields lower EE than the MMF-driven Baseline 2. The greedy nature of PF causes aerial nodes to cluster redundantly around hotspots. As elucidated in our theoretical analysis, this clustering pushes the signal-to-noise ratio into a regime of severely diminishing marginal capacity returns. The excessive hovering energy expended by these redundantly clustered UAVs yields negligible throughput improvements, while simultaneously inducing backhaul congestion and penalizing service equity.

Although Baseline 3 seemingly yields the highest mean fairness indices, its exceptionally wide standard deviation shaded regions reveal severe performance unreliability. The proposed ORCHID framework demonstrates robust scalability. Leveraging the multi-agent coordination of MAPPO, the swarm minimizes redundant coverage overlaps. While ORCHID conservatively trades a fraction of the highly fluctuating peak EE seen in the DRL baselines, it exhibits the absolute narrowest variance in system EE. Ultimately, ORCHID sustains a strictly higher fairness mean than the DRL baselines while ensuring a vastly more stable and controlled variance compared to the static approach, mathematically verifying its reliability as the swarm scales.

\begin{figure}[!t] 
    \centering
    \subfigure[Impact on system coverage ratio.]{
        \includegraphics[width=0.485\linewidth]{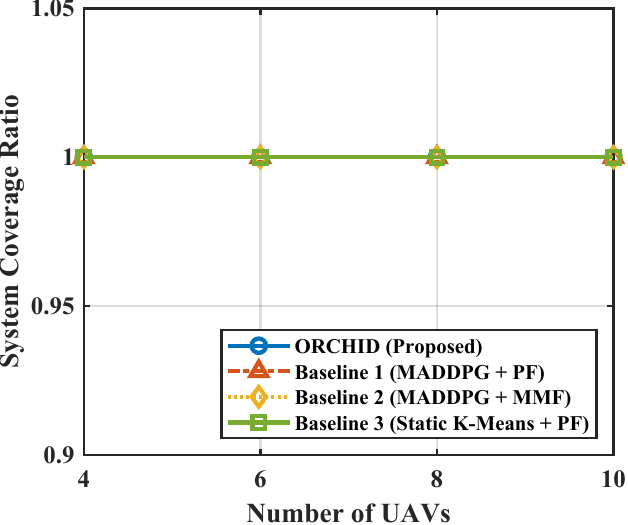}
        \label{fig:scalability_uavs:cov}
    }%
    \subfigure[Impact on system EE.]{
        \includegraphics[width=0.475\linewidth]{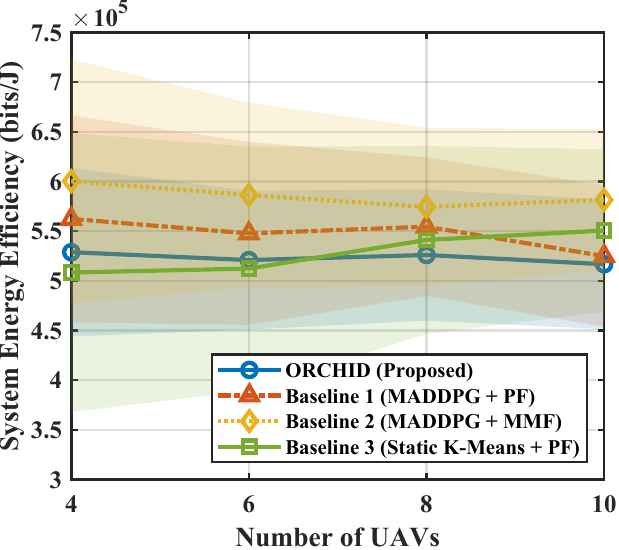}
        \label{fig:scalability_uavs:nee}
    }\\    
    \subfigure[Impact on User Rate JFI.]{
        \includegraphics[width=0.48\linewidth]{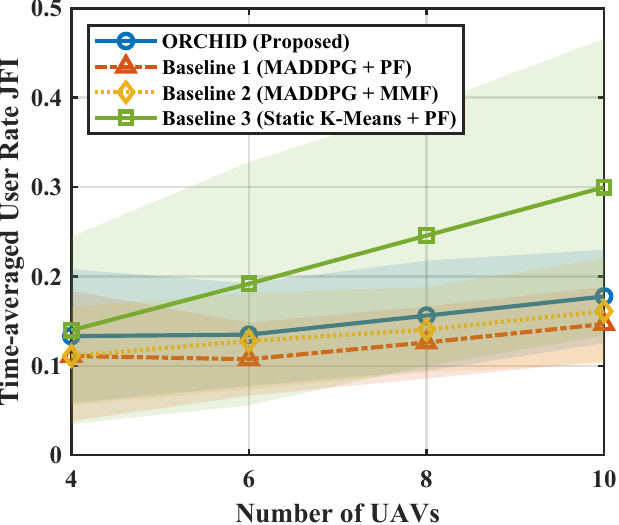}
        \label{fig:scalability_uavs:rate_jfi}
    }%
    \subfigure[Impact on UAV Load JFI.]{
        \includegraphics[width=0.48\linewidth]{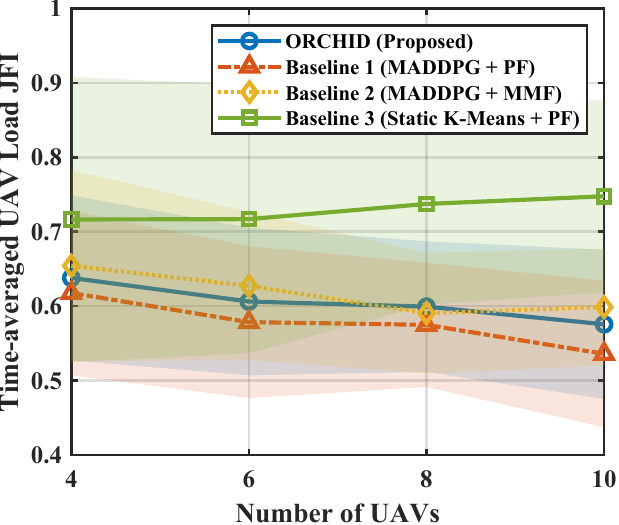}
        \label{fig:scalability_uavs:load_jfi}
    }
    \caption{Scalability analysis against the number of UAVs. The curves indicate the mean performance over 100 inference episodes, while the shaded regions represent the standard deviation, illustrating the algorithm stability and coordination efficiency when scaling the aerial swarm from 4 to 10 nodes across \subref{fig:scalability_uavs:cov} system coverage ratio, \subref{fig:scalability_uavs:nee} system energy efficiency, \subref{fig:scalability_uavs:rate_jfi} user rate JFI, and \subref{fig:scalability_uavs:load_jfi} UAV load JFI.}
    \label{fig:scalability_uavs}
\end{figure}

\subsection{Generalization to Dynamic Environments via Zero-shot Policy Deployment}
\label{sec:mobility_inference}
To evaluate generalization, we execute zero-shot inference under ground user mobility without fine-tuning or retraining. Pre-trained weights from the quasi-static phase are deployed directly into dynamic environments governed by a continuous \textit{Random Walk} (RW) mobility model with boundary reflection. Ground users traverse the area at maximum speeds scaling from 0 to 15 m/s, independently updating directions stochastically to emulate chaotic post-disaster movements.

Exploratory noise and updates are deactivated during inference, with metrics stochastically averaged over 100 independent episodes per speed level. As shown in Fig.~\ref{fig:mobility:cov}, the system coverage ratio remains at a perfect 1.0 for all algorithms across all speed profiles. This resilience stems from the macro-cell umbrella of the GBS, which maintains core connectivity when users migrate beyond optimal aerial coverage zones.

However, higher mobility challenges topological tracking. Fig.~\ref{fig:mobility:nee} highlights the impact on system EE. The static baseline suffers continuous performance degradation as speed increases because its fixed nodes cannot track shifting user clusters, yielding suboptimal path loss. Although the MADDPG baselines achieve slightly higher mean EE, their exceptionally wide standard deviation regions indicate severe performance instability, demonstrating that deterministic policies struggle to adapt robustly to dynamic topological variations.

Figs.~\ref{fig:mobility:rate_jfi} and \ref{fig:mobility:load_jfi} indicate that the advantages of ORCHID are prominent in the fairness metrics. While the static baseline yields higher fairness under quasi-static conditions (0 to 2 m/s), its performance degrades sharply with expanding variance as user speed approaches 15 m/s. In contrast, ORCHID maintains stable user rate and UAV load fairness across all mobility regimes, converging with or outperforming the baselines at high speeds. While Baseline 2 retains a tight variance at the cost of lower mean metrics, ORCHID minimizes variance relative to the volatile static approach, successfully tracking dynamic clusters to prevent cell-edge user starvation without excessive hovering energy waste.

\begin{figure}[!t] 
    \centering
    \subfigure[Impact on system coverage ratio.]{
        \includegraphics[width=0.485\linewidth]{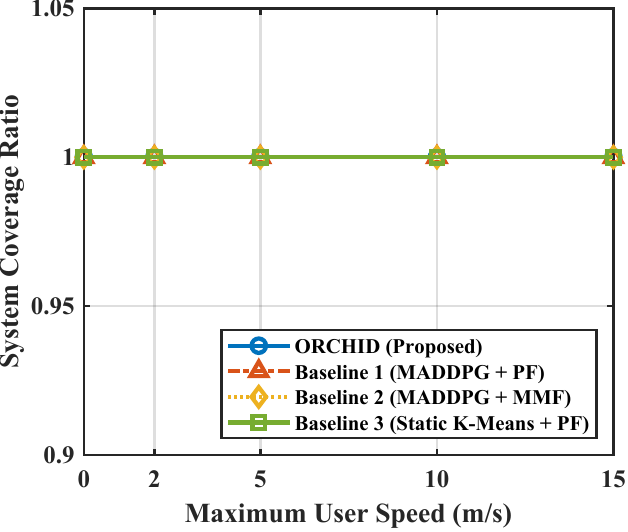}
        \label{fig:mobility:cov}
    }%
    \subfigure[Impact on system EE.]{
        \includegraphics[width=0.475\linewidth]{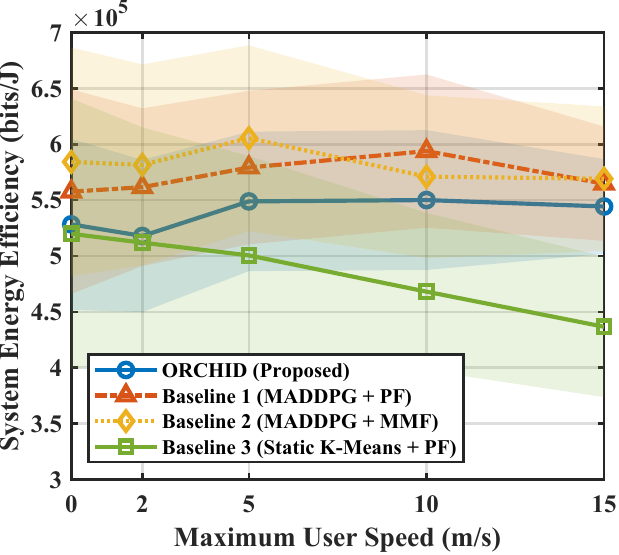}
        \label{fig:mobility:nee}
    }\\    
    \subfigure[Impact on User Rate JFI.]{
        \includegraphics[width=0.48\linewidth]{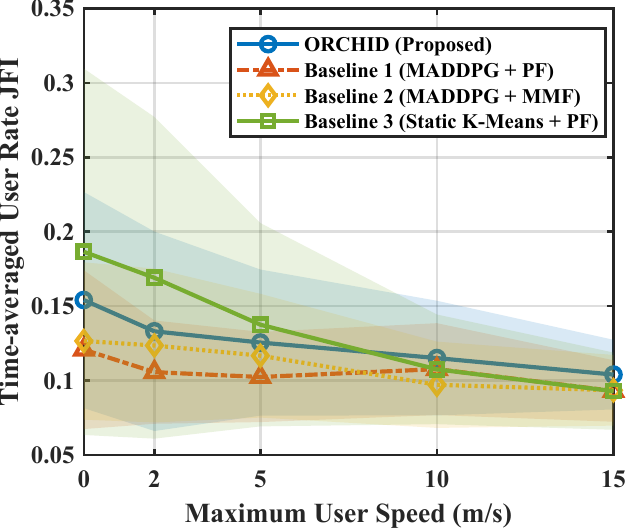}
        \label{fig:mobility:rate_jfi}
    }%
    \subfigure[Impact on UAV Load JFI.]{
        \includegraphics[width=0.48\linewidth]{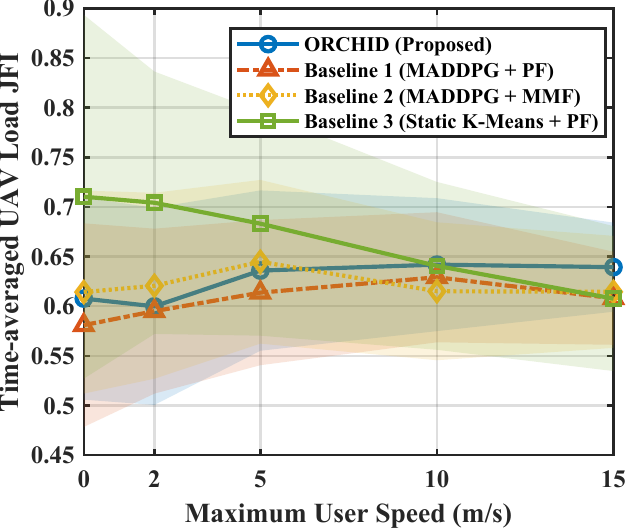}
        \label{fig:mobility:load_jfi}
    }
    \caption{Robustness analysis against ground user mobility. The curves indicate the mean performance over 100 inference episodes, while the shaded regions represent the standard deviation, illustrating the algorithm stability under high dynamic topologies across \subref{fig:mobility:cov} system coverage ratio, \subref{fig:mobility:nee} system energy efficiency, \subref{fig:mobility:rate_jfi} user rate JFI, and \subref{fig:mobility:load_jfi} UAV load JFI.}
    \label{fig:mobility}
\end{figure}

\subsection{Ablation Study on Warm Start Initialization}
\label{sec:ablation_init}
Pure reinforcement learning agents typically struggle with vast exploration spaces in highly non-convex 3D environments. To demonstrate the necessity of topology-aware initialization, we conduct an ablation study comparing the convergence properties of ORCHID utilizing the GBS-aware K-Means warm start against a variant employing purely random spatial initialization. Both methods utilize identical MMF bandwidth allocation and MAPPO fine-tuning mechanisms. 

Fig.~\ref{fig:ablation_init} illustrates the convergence curves over 700 training episodes. An intriguing phenomenon of deceptive optimization emerges in Fig.~\ref{fig:ablation_init:reward}. The random initialization variant exhibits a highly oscillating and seemingly higher total reward. However, this composite metric masks a severe failure in actual physical objectives. Without topology-aware guidance, randomly scattered agents exploit deceptive local optima in the reward landscape (e.g., arbitrarily spreading out to minimize clustering penalties) at the severe expense of communication efficiency and load balance.

\begin{figure}[!t] 
    \centering
    \subfigure[Convergence of Total Reward.]{
        \includegraphics[width=0.485\linewidth]{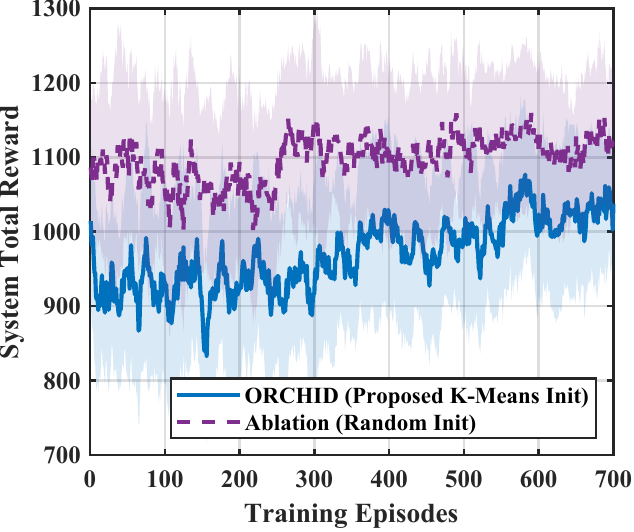}
        \label{fig:ablation_init:reward}
    }%
    \subfigure[Convergence of System EE.]{
        \includegraphics[width=0.475\linewidth]{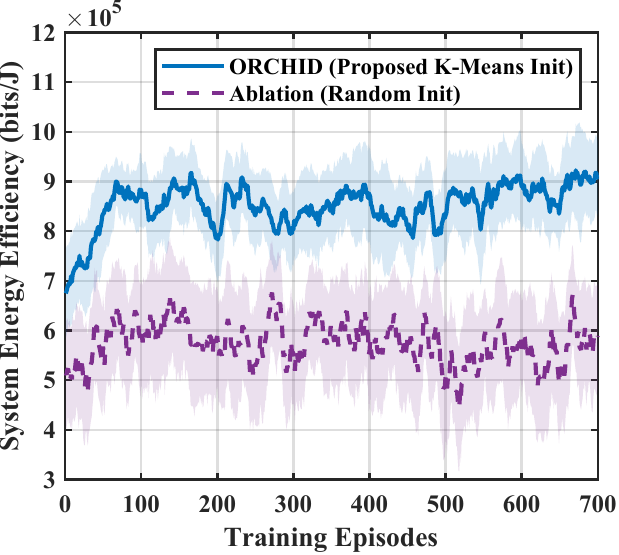}
        \label{fig:ablation_init:ee}
    }\\    
    \subfigure[Convergence of User Rate JFI.]{
        \includegraphics[width=0.48\linewidth]{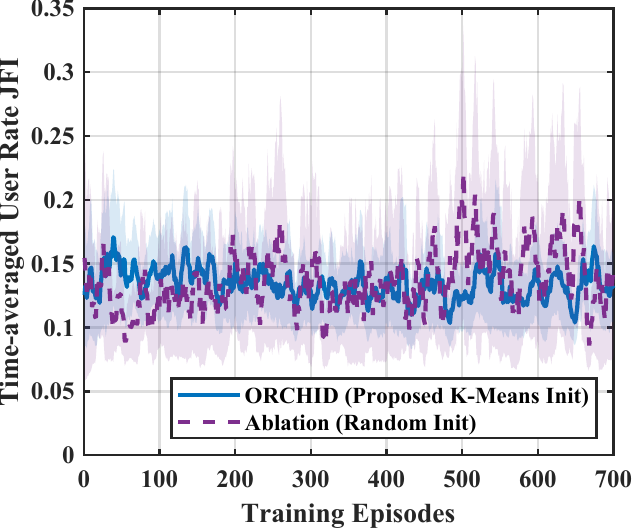}
        \label{fig:ablation_init:rate_jfi}
    }%
    \subfigure[Convergence of UAV Load JFI.]{
        \includegraphics[width=0.48\linewidth]{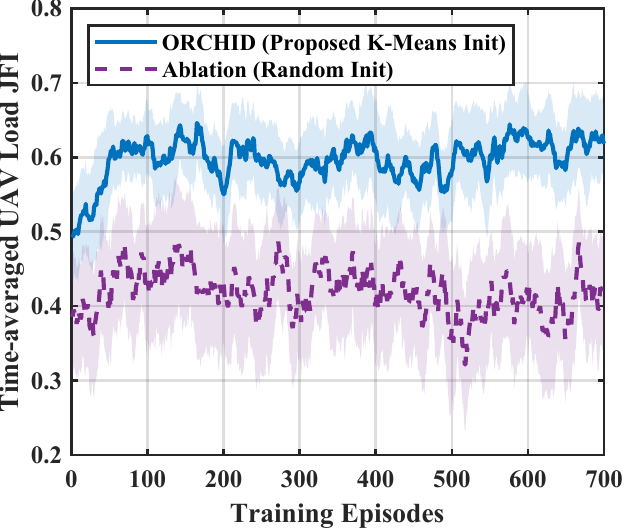}
        \label{fig:ablation_init:load_jfi}
    }
    \caption{Ablation study on the warm start initialization. The convergence learning curves demonstrate that while random spatial scattering exploits deceptive reward optima, the topology-aware K-Means initialization guarantees vastly superior physical performance and algorithm stability across \subref{fig:ablation_init:reward} system total reward, \subref{fig:ablation_init:ee} system energy efficiency, \subref{fig:ablation_init:rate_jfi} user rate JFI, and \subref{fig:ablation_init:load_jfi} UAV load JFI.}
    \label{fig:ablation_init}
\end{figure}

This architectural failure is starkly revealed in the physical metrics. As shown in Figs.~\ref{fig:ablation_init:ee} and \ref{fig:ablation_init:load_jfi}, random initialization yields significantly lower system EE and inequitable UAV loads, accompanied by massive standard deviation regions indicating severe policy instability. Conversely, ORCHID leverages the K-Means algorithm to immediately anchor aerial nodes to user cluster centroids. This intelligent warm start forces the agents to bypass deceptive reward exploits and focus entirely on optimizing true communication metrics. Consequently, ORCHID achieves significant improvements in system EE and load fairness with well-controlled variance, confirming that topology-aware initialization is indispensable for reliable multi-agent swarm deployments.

\subsection{Strategy Analysis: MMF versus PF}
\label{sec:strategy_fairness}
Finally, we evaluate the architectural choice of the bandwidth allocation mechanism by comparing ORCHID-MMF against ORCHID-PF over an extended 1000-episode training horizon. As depicted in Fig.~\ref{fig:strategy_fairness:boxplot_jfi}, with the aid of continuous R\&F interventions, both strategies eventually converge to comparable mean user rate fairness indices (0.081 for MMF and 0.082 for PF). At first glance, this statistical parity might suggest that PF can achieve equivalent service equity given sufficient optimization time. However, a deeper analysis of the physical metrics reveals the deceptive nature of this global average.

Fig.~\ref{fig:strategy_fairness:scatter_jfi_nee} illustrates the long-term energy efficiency and fairness distribution. The PF strategy inherently attempts to maximize aggregate throughput by allocating more bandwidth to users with favorable channel conditions. Consequently, PF agents greedily hover near dense user clusters. While this optimizes local throughput, it induces suboptimal resource contention and backhaul congestion, ultimately constraining the holistic system efficiency. In contrast, the strict MMF constraint forces the aerial swarm into a highly dispersed, load-balanced topology. This spatial dispersion avoids capacity bottlenecks, allowing ORCHID-MMF to achieve a strictly higher centroid in system EE than the opportunistic PF strategy.

\begin{figure}[!t] 
    \vspace{.4em}
    \centering
    \subfigure[Statistical Distribution of JFI.]{
        \includegraphics[width=0.48\linewidth]{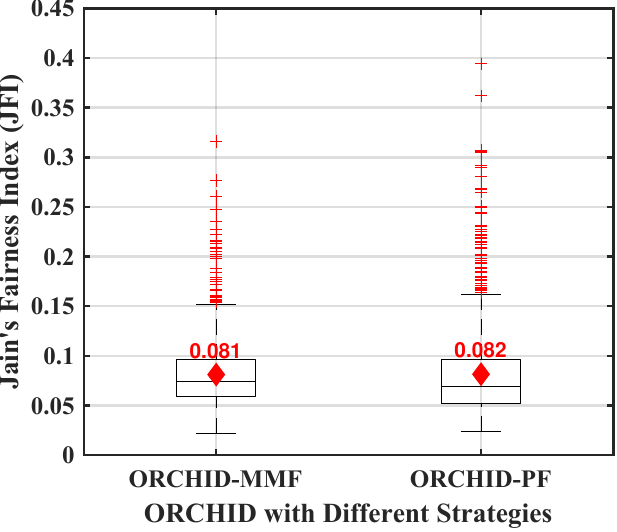}
        \label{fig:strategy_fairness:boxplot_jfi}
    }%
    \subfigure[Efficiency-Fairness Distribution.]{
        \includegraphics[width=0.485\linewidth]{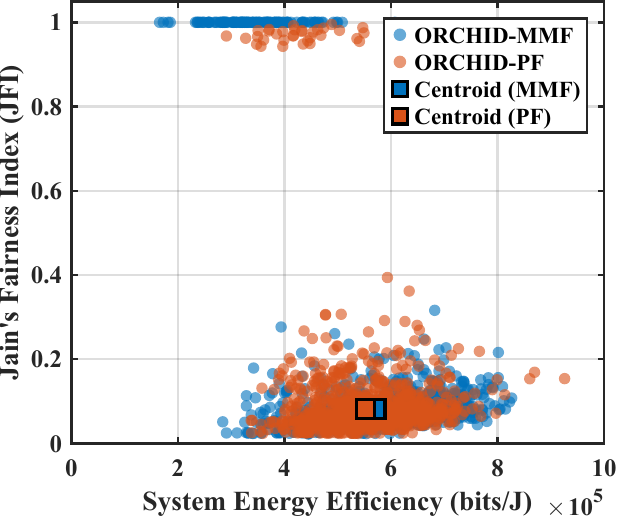}
        \label{fig:strategy_fairness:scatter_jfi_nee}
    }\\
    \subfigure[CDF of Worst-Case User Rate.]{
        \includegraphics[width=0.48\linewidth]{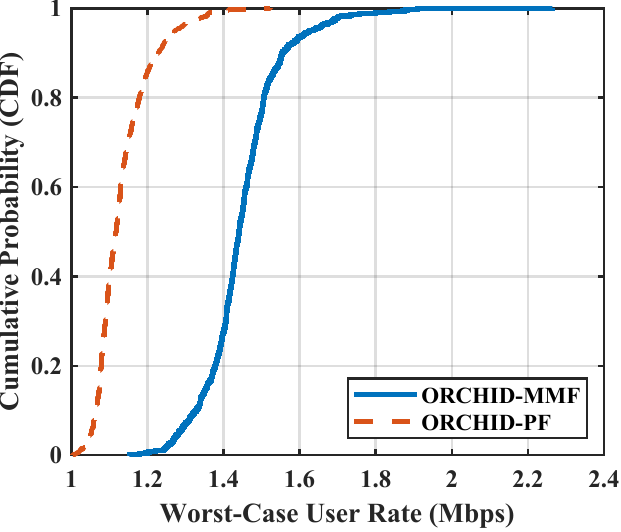}
        \label{fig:strategy_fairness:cdf_worst_rate}
    }
    \caption{Strategy analysis on MMF versus PF over an extended 1000-episode horizon. The evaluation demonstrates that while continuous optimizer interventions allow PF to achieve a comparable average fairness score, MMF overcomes the greedy local optima to achieve superior energy efficiency and strictly guarantees robust minimum data rates, as explicitly demonstrated in \subref{fig:strategy_fairness:boxplot_jfi} the statistical distribution of JFI, \subref{fig:strategy_fairness:scatter_jfi_nee} the energy efficiency and fairness distribution, and \subref{fig:strategy_fairness:cdf_worst_rate} the cumulative distribution function of the worst-case user rate.}
    \label{fig:strategy_fairness}
\end{figure}

The critical advantage of MMF is profoundly elucidated in Fig.~\ref{fig:strategy_fairness:cdf_worst_rate}, which plots the \textit{Cumulative Distribution Function} (CDF) of the worst-case user rate. The graph exposes how PF achieves its high JFI score: it effectively balances the data rates for the majority of users but severely limits the capacity for the extreme edge users, resulting in severe bottleneck rates (starting near $1.0$ Mbps). Conversely, ORCHID-MMF significantly shifts the CDF curve to the right, guaranteeing a substantial minimum data rate floor (exceeding $1.2$ Mbps). This confirms that in mission-critical deployments, relying solely on global fairness indices is insufficient. ORCHID-MMF successfully navigates the complex spatial optimization space to achieve superior overall energy efficiency while strictly enforcing the true ``no-user-left-behind'' baseline connectivity.

\begin{remark}[Theoretical Basis of MMF Energy Efficiency Superiority]
The counter-intuitive phenomenon where the MMF strategy surpasses the PF strategy in system energy efficiency can be mathematically attributed to the logarithmic nature of the Shannon capacity limit and the spatial reuse of OFDMA resource blocks. According to the capacity formula $R = B \log_{2}(1 + \text{SNR})$, the greedy clustering behavior of PF pushes the signal-to-noise ratio of hotspot users into the high-SNR regime, where the marginal capacity gain becomes severely diminished. In this scenario, the immense hovering energy expended by redundant UAVs yields negligible throughput improvements. Conversely, MMF spatially disperses the UAVs to serve cell-edge users, effectively shifting their channel conditions from the near-zero regime to the steep, highly efficient region of the logarithmic curve. Combined with the localized offloading of OFDMA subcarriers, this spatial dispersion prevents bandwidth bottlenecks, thereby maximizing global resource utilization and achieving superior overall energy efficiency.
\end{remark}

\subsection{Summary and Discussion}
\label{sec:summary_discussion}

The comprehensive evaluations across convergence, scalability, mobility generalization, and ablation studies reveal several profound insights into the deployment of UAV swarms for mission-critical networks. The proposed ORCHID framework demonstrates that achieving robust service equity necessitates a multi-stage architectural design.

First, the evaluations reveal a dual-layered insight into the efficiency and fairness paradigm. On one hand, comparing the DRL baselines (Baseline 1 versus Baseline 2) and the extended strategy analysis proves that MMF fundamentally achieves a superior system EE than PF. By avoiding the redundant spatial clustering and backhaul congestion induced by greedy PF, the spatial dispersion of MMF synergistically optimizes both service equity and resource utilization. On the other hand, ORCHID demonstrates that achieving absolute policy stability requires a deliberate architectural trade-off. During inference, ORCHID strictly executes the deterministic mean policy, willingly trading the highly unstable stochastic throughput spikes of the MADDPG baselines to strictly adhere to the MMF constraints. This disciplined execution achieves near-zero performance variance, maintaining a robust minimum data rate floor.

Second, the experiments validate the critical importance of topology-aware initialization in continuous MARL. The ablation study proves that relying solely on reward engineering in highly non-convex environments is insufficient; random initializations frequently exploit deceptive local optima, resulting in oscillating policies and disastrous physical metrics. By anchoring the initial deployment to spatial centroids via the K-Means algorithm, ORCHID bypasses these deceptive landscapes.

Finally, the zero-shot inference analysis underscores the remarkable generalization capabilities of the framework. While static deployment baselines excel in quasi-static environments, they experience severe performance degradation with massive performance fluctuations under high user mobility. In contrast, ORCHID seamlessly adapts to highly dynamic topologies without requiring retraining, exhibiting the most robust variance control across all scalability and mobility scenarios.

\color{black}
\section{Conclusion}
\label{sec:conclusion}

In this paper, we proposed ORCHID, a resilient two-stage orchestration framework designed to optimize collaborative UAV-GBS deployment in mission-critical AGINs. By synergizing GBS-aware topology partitioning with a MAPPO-based Reset-and-Finetune mechanism, the framework effectively resolves the exploration-exploitation dilemma inherent in multi-objective reinforcement learning. 
Our theoretical analysis and experimental evaluation provide complementary insights into the relationship between fairness and energy efficiency. Unlike conventional PF, which tends to favor UAV clustering around users with strong channel conditions and may increase the risk of cell-edge service degradation and backhaul congestion, the proposed MMF design provides a theoretical explanation for the emergence of a more spatially balanced UAV topology. Experimental results further demonstrate that this spatial organization can improve system energy efficiency relative to PF while maintaining more equitable service provisioning. In addition, aided by the proposed Reset-and-Finetune mechanism, ORCHID intentionally sacrifices opportunistic throughput peaks in favor of more stable long-term policy performance, resulting in consistently lower performance variance. Consequently, the proposed framework effectively mitigates coverage holes, helps maintain a higher minimum data rate for disadvantaged users, and better supports the ``no-user-left-behind'' objective in disaster-resilient communications.

Overall, ORCHID demonstrates superior zero-shot topological tracking and exceptional policy stability against state-of-the-art baselines and static deployment heuristics. Future work will extend this framework to high-mobility vehicular environments, exploring the integration of ISAC protocols, dynamic spectrum sharing with interference-aware power control, and decentralized onboard inference to further enhance the resilience and spectral efficiency of next-generation emergency networks.

\color{black}




\bibliographystyle{IEEEtran}
\bibliography{IEEEabrv,reference}

@STRING{IEEE_J_VT         = "{IEEE} Trans. Veh. Technol."}

@STRING{IEEE_J_COML       = "{IEEE} Commun. Lett."}

@STRING{IEEE_J_JSAC       = "{IEEE} J. Sel. Areas Commun."}

@STRING{IEEE_J_COM        = "{IEEE} Trans. Commun."}

@STRING{IEEE_J_WCOM       = "{IEEE} Trans. Wireless Commun."}

@STRING{IEEE_J_WCOML      = "{IEEE} Wireless Commun. Lett."}

@STRING{IEEE_J_GCN        = "{IEEE} Trans. Green Commun. Netw."}

@STRING{IEEE_J_IOT        = "{IEEE} Internet Things J."}

@STRING{IEEE_J_MC         = "{IEEE} Trans. Mobile Comput."}

@STRING{IEEE_O_ACC        = "{IEEE} Access"}

@STRING{IEEE_M_COM        = "{IEEE} Commun. Mag."}

@STRING{IEEE_O_CSTO       = "{IEEE} Commun. Surveys Tuts."}

@STRING{IEEE_M_NET        = "{IEEE} Netw."}

@article{9275613,
  author  = {Giordani, Marco and Zorzi, Michele},
  journal = IEEE_M_NET,
  title   = {Non-Terrestrial Networks in the {6G} Era: Challenges and Opportunities},
  year    = {2021},
  volume  = {35},
  number  = {2},
  pages   = {244-251},
  month   = {Mar./Apr.},
  doi     = {10.1109/MNET.011.2000493}
}

@article{10579820,
  author  = {Chen, Qian and Guo, Zheng and Meng, Weixiao and Han, Shuai and Li, Cheng and Quek, Tony Q. S.},
  journal = IEEE_O_CSTO,
  title   = {A Survey on Resource Management in Joint Communication and Computing-Embedded {SAGIN}},
  year    = {2025},
  volume  = {27},
  number  = {3},
  pages   = {1911-1954},
  month   = Jun,
  doi     = {10.1109/COMST.2024.3421523}
}

@article{11047530,
  author  = {Zhao, Wei and Cui, Shaoxin and Qiu, Wen and He, Zhiqiang and Liu, Zhi and Zheng, Xiao and Mao, Bomin and Kato, Nei},
  journal = IEEE_O_CSTO,
  title   = {A Survey on {DRL}-Based {UAV} Communications and Networking: {DRL} Fundamentals, Applications and Implementations},
  year    = {2026},
  volume  = {28},
  number  = {},
  pages   = {3911-3941},
  doi     = {10.1109/COMST.2025.3581912}
}

@article{7470933,
  author  = {Zeng, Yong and Zhang, Rui and Lim, Teng Joon},
  journal = IEEE_M_COM,
  title   = {Wireless communications with unmanned aerial vehicles: opportunities and challenges},
  year    = {2016},
  volume  = {54},
  number  = {5},
  pages   = {36-42},
  month   = may,
  doi     = {10.1109/MCOM.2016.7470933}
}

@article{8642333,
  author  = {Lai, Chuan-Chi and Chen, Chun-Ting and Wang, Li-Chun},
  journal = IEEE_J_WCOML,
  title   = {On-Demand Density-Aware {UAV} Base Station {3D} Placement for Arbitrarily Distributed Users With Guaranteed Data Rates},
  year    = {2019},
  volume  = {8},
  number  = {3},
  pages   = {913-916},
  month   = Jun,
  doi     = {10.1109/LWC.2019.2899599}
}

@article{9177297,
  author  = {Lai, Chuan-Chi and Wang, Li-Chun and Han, Zhu},
  journal = IEEE_J_MC,
  title   = {The Coverage Overlapping Problem of Serving Arbitrary Crowds in {3D} Drone Cellular Networks},
  year    = {2022},
  volume  = {21},
  number  = {3},
  month   = mar,
  doi     = {10.1109/TMC.2020.3019106}
}

@article{drones8060214,
  author         = {Wei, Dexing and Zhang, Lun and Liu, Quan and Chen, Hao and Huang, Jian},
  title          = {{UAV} Swarm Cooperative Dynamic Target Search: A {MAPPO}-Based Discrete Optimal Control Method},
  journal        = {Drones},
  volume         = {8},
  year           = {2024},
  number         = {6},
  article-number = {214},
  month          = may,
  doi            = {10.3390/drones8060214}
}

@inproceedings{lowe2017multi,
  author    = {Lowe, Ryan and Wu, Yi and Tamar, Aviv and Harb, Jean and Abbeel, Pieter and Mordatch, Igor},
  title     = {Multi-agent actor-critic for mixed cooperative-competitive environments},
  year      = {2017},
  booktitle = {International Conference on Neural Information Processing Systems (NIPS)},
  address   = {Long Beach, California, USA}
}

@inproceedings{NIPS2022_1787_Yu,
  author    = {Yu, Chao and Velu, Akash and Vinitsky, Eugene and Gao, Jiaxuan and Wang, Yu and Bayen, Alexandre and Wu, Yi},
  title     = {The surprising effectiveness of {PPO} in cooperative multi-agent games},
  year      = {2022},
  booktitle = {International Conference on Neural Information Processing Systems (NIPS)},
  address   = {New Orleans, LA, USA}
}

@article{9946428,
  author  = {Lai, Chuan-Chi and Bhola and Tsai, Ang-Hsun and Wang, Li-Chun},
  journal = IEEE_J_VT,
  title   = {Adaptive and Fair Deployment Approach to Balance Offload Traffic in Multi-{UAV} Cellular Networks},
  year    = {2023},
  volume  = {72},
  number  = {3},
  pages   = {3724-3738},
  month   = mar,
  doi     = {10.1109/TVT.2022.3221557}
}

@techreport{jain1984aquantitative,
  title       = {A Quantitative Measure of Fairness and Discrimination for Resource Allocation in Shared Systems},
  author      = {Jain, Raj and Chiu, Dah-Ming and Hawe, William},
  year        = {1984},
  month       = sep,
  institution = {Digital Equipment Corporation},
  note        = {{TR-301}}
}

@article{7486987,
  author  = {Mozaffari, Mohammad and Saad, Walid and Bennis, Mehdi and Debbah, M\acute{e}rouane},
  journal = IEEE_J_COML,
  title   = {Efficient Deployment of Multiple Unmanned Aerial Vehicles for Optimal Wireless Coverage},
  year    = {2016},
  volume  = {20},
  number  = {8},
  pages   = {1647-1650},
  month   = aug,
  doi     = {10.1109/LCOMM.2016.2578312}
}

@article{10122731,
  author  = {Mahmood, Asad and Vu, Thang Xuan and Chatzinotas, Symeon and Ottersten, Bj\ddot{o}rn},
  journal = IEEE_J_VT,
  title   = {Joint Optimization of {3D} Placement and Radio Resource Allocation for Per-{UAV} Sum Rate Maximization},
  year    = {2023},
  volume  = {72},
  number  = {10},
  pages   = {13094-13105},
  month   = oct,
  doi     = {10.1109/TVT.2023.3274815}
}

@inproceedings{SODA07_Arthur,
  author    = {Arthur, David and Vassilvitskii, Sergei},
  title     = {k-means++: the advantages of careful seeding},
  year      = {2007},
  isbn      = {9780898716245},
  booktitle = {The Eighteenth Annual ACM-SIAM Symposium on Discrete Algorithms},
  address   = {New Orleans, Louisiana}
}

@article{R2025102621,
  author  = {{Dhinesh Kumar R.} and Rammohan, A.},
  title   = {Optimizing {UAV} deployment for maximizing coverage and data rate efficiency using multi-agent deep deterministic policy gradient and {Bayesian} optimization},
  journal = {Physical Communication},
  volume  = {69},
  pages   = {102621},
  year    = {2025},
  month   = apr,
  doi     = {10.1016/j.phycom.2025.102621}
}

@article{11153951,
  author  = {Swain, Sipra and Ranjan Swain, Rakesh and Ranjan Senapati, Biswa and Mohan Khilar, Pabitra},
  journal = IEEE_O_ACC,
  title   = {An Optimized {2D} Ground Area Coverage Using {UAV}-Enabled Sensor Networks},
  year    = {2025},
  volume  = {13},
  number  = {},
  pages   = {161299-161310},
  month   = sep,
  doi     = {10.1109/ACCESS.2025.3608044}
}

@article{8618602,
  author  = {Zhang, Guangchi and Wu, Qingqing and Cui, Miao and Zhang, Rui},
  journal = IEEE_J_WCOM,
  title   = {Securing {UAV} Communications via Joint Trajectory and Power Control},
  year    = {2019},
  volume  = {18},
  number  = {2},
  pages   = {1376-1389},
  month   = feb,
  doi     = {10.1109/TWC.2019.2892461}
}

@article{zeng2019energy,
  author  = {Zeng, Yong and Xu, Jie and Zhang, Rui},
  journal = IEEE_J_WCOM,
  title   = {Energy Minimization for Wireless Communication With Rotary-Wing UAV},
  year    = {2019},
  volume  = {18},
  number  = {4},
  pages   = {2329-2345},
  month   = apr,
  doi     = {10.1109/TWC.2019.2902559}
}

@article{li2020UAV,
  author  = {Li, Xingwang and Wang, Qunshu and Liu, Yuanwei and Tsiftsis, Theodoros A. and Ding, Zhiguo and Nallanathan, Arumugam},
  journal = IEEE_J_WCOML,
  title   = {{UAV}-Aided Multi-Way {NOMA} Networks With Residual Hardware Impairments},
  year    = {2020},
  volume  = {9},
  number  = {9},
  pages   = {1538-1542},
  month   = sep,
  doi     = {10.1109/LWC.2020.2996782}
}

@article{10320337,
  author  = {Lei, Jiayi and Zhang, Tiankui and Mu, Xidong and Liu, Yuanwei},
  journal = IEEE_J_COM,
  title   = {{NOMA} for {STAR-RIS} Assisted {UAV} Networks},
  year    = {2024},
  volume  = {72},
  number  = {3},
  pages   = {1732-1745},
  month   = mar,
  doi     = {10.1109/TCOMM.2023.3333880}
}

@article{liu2020energy,
  author  = {Liu, Chi Harold and Ma, Xiaoxin and Gao, Xudong and Tang, Jian},
  journal = IEEE_J_MC,
  title   = {Distributed Energy-Efficient Multi-{UAV} Navigation for Long-Term Communication Coverage by Deep Reinforcement Learning},
  year    = {2020},
  volume  = {19},
  number  = {6},
  pages   = {1274-1285},
  month   = jun,
  doi     = {10.1109/TMC.2019.2908171}
}

@article{hu2020reinforcement,
  author  = {Hu, Jingzhi and Zhang, Hongliang and Song, Lingyang and Schober, Robert and Poor, H. Vincent},
  journal = IEEE_J_COM,
  title   = {Cooperative Internet of {UAV}s: Distributed Trajectory Design by Multi-Agent Deep Reinforcement Learning},
  year    = {2020},
  volume  = {68},
  number  = {11},
  pages   = {6807-6821},
  month   = nov,
  doi     = {10.1109/TCOMM.2020.3013599}
}

@article{ning2024multi,
  author  = {Ning, Zhaolong and Yang, Yuxuan and Wang, Xiaojie and Song, Qingyang and Guo, Lei and Jamalipour, Abbas},
  journal = IEEE_J_MC,
  title   = {Multi-Agent Deep Reinforcement Learning Based {UAV} Trajectory Optimization for Differentiated Services},
  year    = {2024},
  volume  = {23},
  number  = {5},
  pages   = {5818-5834},
  month   = may,
  doi     = {10.1109/TMC.2023.3312276}
}

@article{kang2023mappo,
  author  = {Kang, Hongyue and Chang, Xiaolin and Mišić, Jelena and Mišić, Vojislav B. and Fan, Junchao and Liu, Yating},
  journal = IEEE_J_IOT,
  title   = {Cooperative {UAV} Resource Allocation and Task Offloading in Hierarchical Aerial Computing Systems: A {MAPPO}-Based Approach},
  year    = {2023},
  volume  = {10},
  number  = {12},
  pages   = {10497-10509},
  month   = jun,
  doi     = {10.1109/JIOT.2023.3240173}
}

@article{9475471,
  author  = {Zhao, Mingxiong and Li, Wentao and Bao, Lingyan and Luo, Jia and He, Zhenli and Liu, Di},
  journal = IEEE_J_GCN,
  title   = {Fairness-Aware Task Scheduling and Resource Allocation in {UAV}-Enabled Mobile Edge Computing Networks},
  year    = {2021},
  volume  = {5},
  number  = {4},
  pages   = {2174-2187},
  month   = dec,
  doi     = {10.1109/TGCN.2021.3095070}
}

@inproceedings{8885992,
  author    = {Qin, Juan and Wei, Zhiqing and Qiu, Chen and Feng, Zhiyong},
  booktitle = {IEEE Wireless Communications and Networking Conference (WCNC)},
  title     = {Edge-Prior Placement Algorithm for {UAV}-Mounted Base Stations},
  year      = {2019},
  doi       = {10.1109/WCNC.2019.8885992},
  address   = {Marrakech, Morocco}
}

@article{10146333,
  author  = {Fu, Shu and Feng, Xue and Sultana, Ajmery and Zhao, Lian},
  journal = IEEE_J_WCOM,
  title   = {Joint Power Allocation and {3D} Deployment for {UAV-BSs}: A Game Theory Based Deep Reinforcement Learning Approach},
  year    = {2024},
  volume  = {23},
  number  = {1},
  pages   = {736-748},
  month   = feb,
  doi     = {10.1109/TWC.2023.3281812}
}

@article{al2014modeling,
  author  = {Al-Hourani, Akram and Kandeepan, Sithamparanathan and Lardner, Simon},
  journal = IEEE_J_WCOML,
  title   = {Optimal {LAP} Altitude for Maximum Coverage},
  year    = {2014},
  volume  = {3},
  number  = {6},
  pages   = {569-572},
  month   = dec,
  doi     = {10.1109/LWC.2014.2342736}
}

@book{chiu2013stochastic,
  title     = {Stochastic Geometry and its Applications},
  author    = {Chiu, Sung Nok and Stoyan, Dietrich and Kendall, Wilfrid S and Mecke, Joseph},
  year      = {2013},
  publisher = {John Wiley \& Sons}
}

@inproceedings{ash2020warm,
  author    = {Ash, Jordan T. and Adams, Ryan P.},
  title     = {On warm-starting neural network training},
  year      = {2020},
  booktitle = {International Conference on Neural Information Processing Systems (NIPS)},
  address   = {Vancouver, BC, Canada}
}

@inproceedings{nikishin2022primacy,
  title     = {The Primacy Bias in Deep Reinforcement Learning},
  author    = {Nikishin, Evgenii and Schwarzer, Max and D'Oro, Pierluca and Bacon, Pierre-Luc and Courville, Aaron},
  booktitle = {International Conference on Machine Learning (ICML)},
  address   = {Baltimore, Maryland, USA},
  year      = {2022}
}

@misc{papoudakis2019dealing,
  title         = {Dealing with Non-Stationarity in Multi-Agent Deep Reinforcement Learning},
  author        = {Georgios Papoudakis and Filippos Christianos and Arrasy Rahman and Stefano V. Albrecht},
  year          = {2019},
  eprint        = {1906.04737},
  archiveprefix = {arXiv},
  primaryclass  = {cs.LG},
  url           = {https://arxiv.org/abs/1906.04737}
}

@inproceedings{reddi2018convergence,
  author    = {Sashank J. Reddi and
               Satyen Kale and
               Sanjiv Kumar},
  title     = {On the Convergence of {Adam} and Beyond},
  booktitle = {International Conference on Learning Representations (ICLR)},
  year      = {2018},
  address   = {Vancouver, BC, Canada}
}

@article{zhan2025joint,
  author  = {Zhan, Cheng and Liu, Wei and Song, Kaifeng and Fan, Rongfei and Liu, Jun and Hu, Han},
  journal = IEEE_J_MC,
  title   = {Joint {UAV} Placement and Dependent Task Offloading in Multi-{UAV} {MEC} Networks: A Graph Attention Enhanced {DRL} Approach},
  year    = {2026},
  volume  = {25},
  number  = {4},
  pages   = {5285-5301},
  month   = apr,
  doi     = {10.1109/TMC.2025.3628608}
}

@article{jia2025distributionally,
  author  = {Jia, Ziye and Cui, Can and Dong, Chao and Wu, Qihui and Ling, Zhuang and Niyato, Dusit and Han, Zhu},
  journal = IEEE_J_MC,
  title   = {Distributionally Robust Optimization for Aerial Multi-Access Edge Computing via Cooperation of {UAVs} and {HAPs}},
  year    = {2025},
  volume  = {24},
  number  = {10},
  pages   = {10853-10867},
  month   = oct,
  doi     = {10.1109/TMC.2025.3571023}
}

@article{liu2022integrated,
  author  = {Liu, Fan and Cui, Yuanhao and Masouros, Christos and Xu, Jie and Han, Tony Xiao and Eldar, Yonina C. and Buzzi, Stefano},
  journal = IEEE_J_JSAC,
  title   = {Integrated Sensing and Communications: Toward Dual-Functional Wireless Networks for {6G} and Beyond},
  year    = {2022},
  volume  = {40},
  number  = {6},
  pages   = {1728-1767},
  month   = jun,
  doi     = {10.1109/JSAC.2022.3156632}
}

@article{cui2021integrating,
  author  = {Cui, Yuanhao and Liu, Fan and Jing, Xiaojun and Mu, Junsheng},
  journal = IEEE_M_NET,
  title   = {Integrating Sensing and Communications for Ubiquitous {IoT}: Applications, Trends, and Challenges},
  year    = {2021},
  volume  = {35},
  number  = {5},
  pages   = {158-167},
  month   = {Sep./Oct.},
  doi     = {10.1109/MNET.010.2100152}
}

@article{tang2025mulvr,
  author  = {Tang, Xiao-Wei and Huang, Yi and Shi, Yunmei and Wu, Qingqing},
  journal = IEEE_J_WCOM,
  title   = {{MUL-VR}: Multi-{UAV} Collaborative Layered Visual Perception and Transmission for Virtual Reality},
  year    = {2025},
  volume  = {24},
  number  = {4},
  pages   = {2734-2749},
  month   = apr,
  doi     = {10.1109/TWC.2024.3524275}
}

@article{nash1950bargaining,
  title   = {The bargaining problem},
  author  = {Nash, John F.},
  journal = {Econometrica},
  pages   = {155--162},
  volume  = {18},
  number  = {2},
  month   = apr,
  year    = {1950}
}

@article{kalai1977proportional,
  title   = {Proportional solutions to bargaining situations: interpersonal utility comparisons},
  author  = {Kalai, Ehud},
  journal = {Econometrica},
  volume  = {45},
  number  = {7},
  pages   = {1623--1630},
  month   = oct,
  year    = {1977}
}
\ifCLASSOPTIONcaptionsoff  \newpage \fi 

\vspace{-10pt}
\begin{IEEEbiography}[{\includegraphics[width=1in,height=1.25in,clip,keepaspectratio]{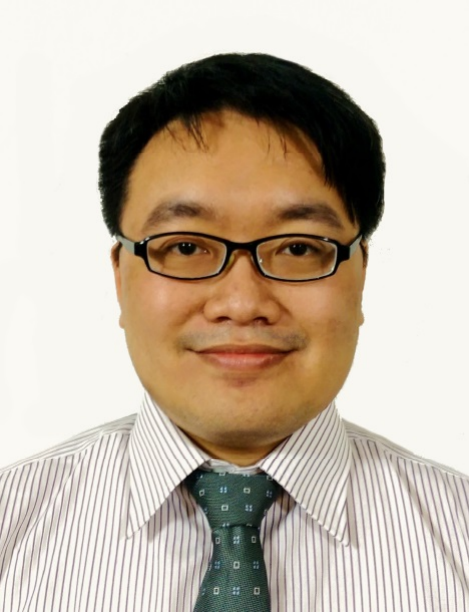}}]{Chuan-Chi Lai}
    (Member, IEEE) received the Ph.D. degree in Computer Science and Information Engineering from the National Taipei University of Technology, Taiwan, in 2017. He held research and faculty positions at National Chiao Tung University and Feng Chia University prior to his current role. Since 2024, he has been an Assistant Professor with the Department of Communications Engineering, National Chung Cheng University, Chiayi, Taiwan. His research interests include mobile edge computing, UAV networks, and AI for wireless communications. Dr. Lai was a recipient of the Postdoctoral Researcher Academic Research Award from the NSTC, Taiwan, in 2019, and Best Paper Awards at WOCC (2018, 2021) and ICUFN (2015).
\end{IEEEbiography}

\vspace{-10pt}
\begin{IEEEbiography}[{\includegraphics[width=1in,height=1.25in,clip,keepaspectratio]{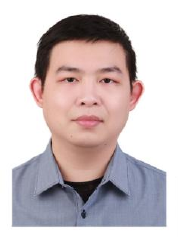}}]{Chi Jai Choy}
    received the B.S. degree in Computer Science and Information Engineering from Feng Chia University, Taichung, Taiwan, in 2021, and the M.S. degree in Computer Science and Information Engineering from Feng Chia University, Taichung, Taiwan, in 2025. His research interests include multi-agent reinforcement learning, UAV-enabled communications, and energy-efficient wireless networks.
\end{IEEEbiography}

\end{document}